\newtheorem{theorem}{Theorem}
\newtheorem{lemma}[theorem]{Lemma}
\pgfplotsset{compat=newest}
\pgfplotsset{plot coordinates/math parser=false}
\pgfplotsset{every axis/.append style={font=\footnotesize}}
\pgfplotsset{
    ylabel right/.style={
        after end axis/.append code={
            \node [rotate=90, anchor=north] at (rel axis cs:1,0.5) {#1};
        }   
    }
}
\newlength\figureheight
\newlength\figurewidth
\newlength\subgraphheight
\newlength\subgraphwidth
\newcommand{\nodedistance}{\tikz@node@distance}
	\saveddimen{\halfheight}{
		\pgfmathsetlength{\pgf@xa}{\pgfkeysvalueof{/pgf/minimum height}/2}
		\pgfmathsetlength{\pgf@xb}{\pgfkeysvalueof{/pgf/outer ysep}}
		\advance\pgf@xa by \pgf@xb
		\pgf@x=\pgf@xa
	}
	\saveddimen{\halfwidth}{
		\pgfmathsetlength{\pgf@xa}{\pgfkeysvalueof{/pgf/minimum width}/2}
		\pgfmathsetlength{\pgf@xb}{\pgfkeysvalueof{/pgf/outer xsep}}
		\advance\pgf@xa by \pgf@xb
		\pgf@x=\pgf@xa
	}
	\saveddimen{\halfheight}{
		\pgfmathsetlength{\pgf@xa}{\pgfkeysvalueof{/pgf/minimum height}/2}
		\pgfmathsetlength{\pgf@xb}{\pgfkeysvalueof{/pgf/outer ysep}}
		\advance\pgf@xa by \pgf@xb
		\pgf@x=\pgf@xa
	}
	\saveddimen{\halfwidth}{
		\pgfmathsetlength{\pgf@xa}{\pgfkeysvalueof{/pgf/minimum width}/2}
		\pgfmathsetlength{\pgf@xb}{\pgfkeysvalueof{/pgf/outer xsep}}
		\advance\pgf@xa by \pgf@xb
		\pgf@x=\pgf@xa
	}
	\saveddimen{\halfheight}{
		\pgfmathsetlength{\pgf@xa}{\pgfkeysvalueof{/pgf/minimum height}/2}
		\pgfmathsetlength{\pgf@xb}{\pgfkeysvalueof{/pgf/outer ysep}}
		\advance\pgf@xa by \pgf@xb
		\pgf@x=\pgf@xa
	}
	\saveddimen{\halfwidth}{
		\pgfmathsetlength{\pgf@xa}{\pgfkeysvalueof{/pgf/minimum width}/2}
		\pgfmathsetlength{\pgf@xb}{\pgfkeysvalueof{/pgf/outer xsep}}
		\advance\pgf@xa by \pgf@xb
		\pgf@x=\pgf@xa
	}
	\saveddimen{\halfheight}{
		\pgfmathsetlength{\pgf@xa}{\pgfkeysvalueof{/pgf/minimum height}/2}
		\pgfmathsetlength{\pgf@xb}{\pgfkeysvalueof{/pgf/outer ysep}}
		\advance\pgf@xa by \pgf@xb
		\pgf@x=\pgf@xa
	}
	\saveddimen{\halfwidth}{
		\pgfmathsetlength{\pgf@xa}{\pgfkeysvalueof{/pgf/minimum width}/2}
		\pgfmathsetlength{\pgf@xb}{\pgfkeysvalueof{/pgf/outer xsep}}
		\advance\pgf@xa by \pgf@xb
		\pgf@x=\pgf@xa
	}
\pgfplotsset{compat=newest}
\pgfplotsset{plot coordinates/math parser=false}
\DeclareMathOperator{\sign}{sign}
\DeclareMathOperator{\sinc}{sinc}
\DeclareMathAlphabet{\mathbit}{OML}{cmr}{bx}{it}
\DeclareMathAlphabet{\mathsf}{OT1}{cmss}{m}{n}
\DeclareMathAlphabet{\mathbsf}{OT1}{cmss}{bx}{it}
\newcommand{\inC}[1]{\ensuremath{\in\mathbb{C}^{#1}}}
\newcommand{\inset}[2]{\ensuremath{\in \left\{#1,\ldots,#2\right\}}}
\newcommand{\Imag}[1]{\ensuremath{\Im\left\{#1\right\}}}
\newcommand\diff[1]{\ensuremath{\:\mathrm{d}#1}}
\newcommand\derivk[3]{\ensuremath{\frac{\mathrm{d}^{#3}#1}{\mathrm{d}{#2}^{#3}}}}
\newcommand\pderiv[2]{\ensuremath{\frac{\partial#1}{\partial#2}}}
\newcommand\pderivk[3]{\ensuremath{\frac{\partial^{#3}#1}{\partial{#2}^{#3}}}}
\def\ps@IEEEtitlepagestyle{%
	\def\@oddfoot{\mycopyrightnotice}%
	\def\@evenfoot{}%
}
\def\mycopyrightnotice{%
	{\begin{minipage}{\textwidth}\centering\footnotesize This work has been accepted for publication in IEEE/OSA Journal of Lightwave Technology. Digital Object Identifier: 10.1109/JLT.2018.2875557 \\ Copyright \textcopyright 2018 IEEE. Personal use of this material is permitted. Permission from IEEE must be obtained for all other uses, in any current or future media, including reprinting/republishing this material for advertising or promotional purposes, creating new collective works, for resale or redistribution to servers or lists, or reuse of any copyrighted component of this work in other works.	\end{minipage}}
	\gdef\mycopyrightnotice{}
}
\begin{document}
%
\title{Communication Using Eigenvalues of Higher Multiplicity of the Nonlinear Fourier Transform}

\author{Javier Garc\'ia
\IEEEcompsocitemizethanks{
	\IEEEcompsocthanksitem Date of current version \today. J. Garc\'ia is with the Institute for Communications Engineering (LNT), Technical University of Munich, Munich 80333, Germany (e-mail: javier.garcia@tum.de). His work was supported by the German Research Foundation under Grant KR 3517/8-1. 
	}
}
\maketitle

\begin{abstract}
A generalized Nonlinear Fourier Transform (GNFT), which includes eigenvalues of higher multiplicity, is considered for information transmission over fiber optic channels. Numerical algorithms are developed to compute the direct and inverse GNFTs. For closely-spaced eigenvalues, examples suggest that the GNFT is more robust than the NFT to the practical impairments of truncation, discretization, attenuation and noise. Communication using a soliton with one double eigenvalue is numerically demonstrated, and its information rates are compared to solitons with one and two simple eigenvalues. 
\end{abstract}

\begin{IEEEkeywords}
	Inverse Scattering Transform, Nonlinear Fourier Transform, optical fiber, higher multiplicity eigenvalues, spectral efficiency
\end{IEEEkeywords}

%
\IEEEpeerreviewmaketitle

\section{Introduction}\label{sec:intro}
Current optical transmission systems exhibit a peak in the achievable rate due to the Kerr nonlinearity of the Nonlinear Schr\"odinger Equation (NLSE)~\cite{essiambre_limits}. Several techniques have been proposed to attempt to overcome this limit, of which the Inverse Scattering Transform (IST)~\cite{ablowitz_ist}, or the Nonlinear Fourier Transform (NFT)~\cite{mansoor_all}, has attracted considerable attention, see~\cite{Turitsyn_overview} for an overview of the advances and perspectives of the NFT for optical communications.

Information transmission using the NFT has been demonstrated both numerically and experimentally in several works, such as~\cite{dong_experiment, le_experiment, le_transoceanic}. For purely discrete spectrum modulation, the spectral efficiencies obtained so far are not very high~\cite{hari_multieig}. In this paper, eigenvalues of higher multiplicity in the discrete spectrum are considered for communication. The theory for these eigenvalues has been developed in~\cite{aktosun_ho,martines_ho}, but its application to communications have to the best of our knowledge not been explored yet. We develop a generalized NFT (GNFT) approach to communications. The GNFT applies to a larger class of signals than the NFT, and thereby provides additional
degrees of freedom that might help to improve communications systems. Our simulations also show that our generalized NFT (GNFT) processing seems to be more robust than NFT for signals with closely-spaced simple eigenvalues, even if they do not perfectly coincide.

The paper is organized as follows. In Section~\ref{sec:model}, we introduce the NLSE model. Section~\ref{sec:nft} briefly describes the NFT. In Section~\ref{sec:ho_math}, we explain the theory of higher multiplicity eigenvalues from~\cite{aktosun_ho,martines_ho}, and we prove some properties of the GNFT. In Section~\ref{sec:algorithms}, we show how to compute the direct and inverse GNFT. Section~\ref{sec:implementation} evaluates the effect of practical impairments for closely-spaced eigenvalues. Section~\ref{sec:experiment} numerically demonstrates information transmission using the GNFT, and Section~\ref{sec:conclusion} concludes the paper.

\textit{Notation:} the subscripts $t$, $z$, and $\lambda$ (and only these) denote partial derivatives with respect to the corresponding variable, e.g. $a_\lambda$ denotes $ {\partial d}/{\partial \lambda}$. Repeated subscripts and parenthesized superscripts denote higher-order derivatives, e.g., $a_{\lambda\lambda}=a^{(2)}={\partial^2 a}/{\partial \lambda^2}$ and $a^{(\ell)}={\partial^\ell a}/{\partial \lambda^\ell}$. In the latter case, the derivative is taken with respect to $\lambda$.

\section{System model}\label{sec:model}
Assuming perfect attenuation compensation, the slowly varying component $Q(Z, T)$ of an electrical field propagating along an optical fiber obeys the NLSE~\cite[Eq. (2.3.46)]{agrawal_nfo}:
\begin{align}
\pderiv{}{Z}Q(Z, T)= & -j\frac{\beta_2}{2}\pderivk{}{T}{2}Q(Z, T) +j\gamma\left|Q(Z, T)\right|^2 Q(Z, T) \nonumber \\ &+N(Z, T)
\label{eq:nlse_analog}
\end{align}
where $Z$ is distance, $T$ is time, $\beta_2$ is the group velocity dispersion (GVD) parameter, and $\gamma$ is the nonlinear coefficient. The distributed noise $N(Z, T)$ satisfies
\begin{equation}
\int_{0}^{Z} N(Z', T)\diff Z'=\sqrt{N_{\mathrm{ASE}}}W(Z, T)
\end{equation}
where $N_{\mathrm{ASE}}$ is the noise spectral density. Note that, unlike~\cite{essiambre_limits}, we do not include the distance in the definition of $N_{\mathrm{ASE}}$. The Wiener process $W(Z, T)$ may be defined as
\begin{equation}
W(Z, T)=\lim\limits_{K\to\infty}\frac{1}{\sqrt{K}}\sum_{k=1}^{\left\lfloor KZ\right\rfloor} W_k(T)
\end{equation}
where the $W_k(T)$ are independent and identically distributed (i.i.d.) circularly symmetric complex Gaussian processes with zero mean, bandwidth $B$, and autocorrelation
\begin{equation}
\mathrm{E}\left[W_k(T)W_k^*(T')\right]=B\sinc\left(B\left(T-T'\right)\right)
\end{equation}
where $\sinc(x)\triangleq\sin\left(\pi x\right)/\left(\pi x\right)$.

\section{The Nonlinear Fourier Transform}\label{sec:nft}
In this section, we briefly introduce the steps involved in the NFT. For more detail, the reader is referred to~\cite{mansoor_all}.

By applying the following change of variables:
\begin{align}
&T=T_0t,\quad Z=2\frac{T_0^2}{\left|\beta_2\right|}z, \quad Q(Z, T)=\frac{1}{T_0}\sqrt{\frac{\left|\beta_2\right|}{\gamma}}q(z, t)
\label{eq:normalize}
\end{align}
the NLSE~\eqref{eq:nlse_analog}, ignoring noise, is normalized to
\begin{equation}
q_z(z, t)=-j\sign\left(\beta_2\right)q_{tt}(z, t) +j2\left|q(z, t)\right|^2 q(z, t) 
\label{eq:nlse}
\end{equation}
and we choose $\beta_2<0$ to focus on the case of anomalous GVD~\cite[p. 131]{agrawal_nfo}. The parameter $T_0$ can be freely chosen. 

The NFT is based on the existence of a Lax pair $(L, M)$ of operators that satisfies
\begin{equation}
{L}_{z}=ML-LM.
\end{equation}
As shown in~\cite[Section 1.4]{ablowitz_ist}, the eigenvalues $\lambda$ of $L$ are invariant in $z$. For the NLSE, the eigenvectors $v$ of $L$ satisfy
\begin{align}
v_z&=Mv \\v_t&=\left({\begin{matrix}
	-j\lambda & q \\ -q^* & j\lambda
	\end{matrix}} \right)v \label{eq:zs}
\end{align}
where
\begin{align}
L(z, t)&=j\left({\begin{matrix}
\frac{\partial}{\partial t} & -q \\ -q^* & -\frac{\partial}{\partial t}
\end{matrix}}\right) \\
M(z, t, \lambda)&=\left({\begin{matrix}
2j\lambda^2-j\left|q\right|^2 & -2\lambda q-jq_t \\ 2\lambda q^*-jq^*_t & -2j\lambda^2+j\left|q\right|^2
\end{matrix}}\right). 
\end{align}
The NFT is calculated by solving the Zakharov-Shabat system~\eqref{eq:zs}. We will often drop the dependence on $z$ to simplify notation. A solution $v^2(t, \lambda)$, bounded in the upper complex half plane ($\lambda\in\mathbb{C}^+$), is obtained using the boundary condition
\begin{equation}
v^2(t, \lambda)\to\left(\begin{matrix}
1\\0
\end{matrix}\right)e^{-j\lambda t}, \ t\to-\infty.
\label{eq:bc}
\end{equation}
The \textit{spectral functions} $a(\lambda)$ and $b(\lambda)$ are obtained as
\begin{subequations}
	\begin{align}
	a(\lambda)&=\lim\limits_{t\to\infty} v_1^2(t, \lambda)e^{j\lambda t} \\
	b(\lambda)&=\lim\limits_{t\to\infty} v_2^2(t, \lambda)e^{-j\lambda t}.
	\end{align}
	\label{eq:ab_lim}
\end{subequations}
The NFT of the signal $q(z, t)$ is made up of two spectra:
\begin{itemize}
	\item the \textit{continuous spectrum} $Q_c(\lambda)=\frac{b(\lambda)}{a(\lambda)}$, for $\lambda\in\mathbb{R}$;
	\item the \textit{discrete spectrum} $Q_d(\lambda_k)=\frac{b(\lambda_k)}{a_\lambda(\lambda_k)}$, for the $K$ eigenvalues $\left\{\lambda_k\in\mathbb{C}^+\colon a(\lambda_k)=0\right\}$.
\end{itemize}

The usefulness of the NFT lies in the fact that, given a signal $q(z, t)$ propagating according to the \textit{noise-free, lossless} NLSE~\eqref{eq:nlse}, its NFT evolves in $z$ according to the following multiplicative relations:
\begin{subequations}
\begin{align}
Q_c(z, \lambda)&=Q_c(0, \lambda)e^{4j\lambda^2 z} \label{eq:Q_cont}\\
\lambda_k(z)&=\lambda_k(0) \\
Q_d(z, \lambda_k)&=Q_d(0, \lambda_k)e^{4j\lambda_k^2 z}. \label{eq:Qd_z}
\end{align}
\end{subequations}

\section{Eigenvalues of higher multiplicity}\label{sec:ho_math}
If $\lambda_k$ is a multiple zero of $a(\lambda)$, then $a_{\lambda}(\lambda_k)=0$, and the above definition of the discrete spectrum is not valid. To the best of our knowledge, all the work on NFT-based communication assumes that all zeros of $a(\lambda)$ are simple, i.e., that all eigenvalues $\lambda_k$ have multiplicity $1$. There has been, however, some work~\cite{aktosun_ho, martines_ho} on the mathematical theory of higher multiplicity eigenvalues.

If the multiplicity of the eigenvalue $\lambda_k$ is $L_k$, we need $L_k$ constants $Q_{k0},\ldots,Q_{k, (L_k-1)}$ to determine the discrete spectrum. In~\cite{martines_ho}, these \textit{norming constants} are defined from the coefficients $r_{k, \ell}$ of the principal part of the Laurent series of $b(\lambda)/a(\lambda)$ around $\lambda_k$:
\begin{equation}
\frac{b(\lambda)}{a(\lambda)}=\frac{r_{k,L_k-1}}{\left(\lambda-\lambda_k\right)^{L_k}}+\cdots +\frac{r_{k,0}}{\left(\lambda-\lambda_k\right)}+\mathcal{O}(1).
\label{eq:Q_kl_series}
\end{equation}
The norming constants $Q_{k,\ell}=j^\ell r_{k,\ell}$ can be calculated as
\begin{equation}
Q_{k,\ell}=\frac{j^\ell}{(L_k-\ell-1)!}\lim\limits_{\lambda\to\lambda_k}\derivk{}{\lambda}{L_k-\ell-1}\left[\left(\lambda-\lambda_k\right)^{L_k}\frac{b(\lambda)}{a(\lambda)}\right].
\label{eq:Q_kl}
\end{equation}

The generalization of the distance evolution equation~\eqref{eq:Qd_z} to the case $L_k\ge 1$ is given by~\cite[Eq. (4.9)]{martines_ho}:
\begin{multline}
\left[\begin{matrix}
Q_{k, (L_k-1)}(z) & \cdots & Q_{k0}(z)
\end{matrix}\right]\\=\left[\begin{matrix}
Q_{k, (L_k-1)}(0) & \cdots & Q_{k0}(0)
\end{matrix}\right]e^{-4j\mathbf{\Lambda}_k^2z}
\label{eq:time_evolution}
\end{multline}
for all $k\inset{1}{K}$, where
\begin{equation}
\mathbf{\Lambda}_k=\left(\begin{matrix}
-j\lambda_k & -1 & 0 & \cdots & 0 \\
0 & -j\lambda_k & -1 & \cdots & 0 \\
\vdots & \vdots & \ddots & \ddots & \vdots  \\
0 & \cdots & 0 & -j\lambda_k & -1 \\
0 & 0 & \cdots & 0 & -j\lambda_k
\end{matrix}\right)\inC{L_k\times L_k}.
\label{eq:Lambda_k}
\end{equation}
We write the GNFT as
\begin{equation}
\mathrm{GNFT}\left\{q(t)\right\}=(Q_c(\lambda), \{\lambda_k\}, \left\{Q_{k\ell}\right\}).
\label{eq:gnft}
\end{equation}

\subsection{Properties of the GNFT}
We prove the following properties in Appendix~\ref{app:properties}.
\begin{enumerate}
	\item \textit{Phase shift}:
	\begin{equation}
	\mathrm{GNFT}\left\{q(t)e^{j\phi_0}\right\}=(Q_c(\lambda)e^{-j\phi_0}, \{\lambda_k\}, \left\{Q_{k\ell}e^{-j\phi_0}\right\}).
	\label{eq:pshift}
	\end{equation}
	\item \textit{Time shift}: if $q'(t)=q(t-t_0)$ then
	\begin{equation}
	\mathrm{GNFT}\left\{q'(t)\right\}=(Q_c'(\lambda), \left\{\lambda_k'\right\}, \left\{Q_{k\ell}'\right\})
	\end{equation}
	satisfies
	\begin{subequations}
	\begin{align}
	& Q_c'(\lambda)=Q_c(\lambda)e^{-2j\lambda t_0} \label{eq:tshift_c} \\
	& \lambda_k'=\lambda_k \label{eq:tshift_lambda} \\
	& \left[\begin{matrix}
	Q_{k, (L_k-1)}' & \cdots & Q_{k0}'
	\end{matrix}\right]\nonumber \\
	&=\left[\begin{matrix}
	Q_{k, (L_k-1)} & \cdots & Q_{k0}
	\end{matrix}\right]e^{2\mathbf{\Lambda}_kt_0}. \label{eq:tshift_d} 
	\end{align}
	\end{subequations}
	\item \textit{Frequency shift}:
	\begin{equation}
	\mathrm{GNFT}\left\{q(t)e^{-2j\omega_0 t}\right\}=(Q_c(\lambda-\omega_0), \{\lambda_k+\omega_0\}, \left\{Q_{k\ell}\right\}).
	\label{eq:fshift}
	\end{equation}
	\item \textit{Time dilation}: for $T>0$
	\begin{equation}
	\mathrm{GNFT}\left\{\frac{1}{T}q\left(\frac{t}{T}\right)\right\}=\left(Q_c(T\lambda), \left\{\frac{\lambda_k}{T}\right\}, \left\{\frac{Q_{k\ell}}{T^{\ell+1}}\right\}\right).
	\label{eq:tdil}
	\end{equation}
	\item \textit{Parseval's theorem}:
	\begin{align}
	\int_{-\infty}^{\infty} \left|q(t)\right|^2\diff t=&\frac{1}{\pi}\int_{-\infty}^{\infty}\log\left(1+\left|Q_c(\lambda)\right|^2\right)\diff{\lambda} \nonumber \\ & +4\sum_{k=0}^{K}L_k\Imag{\lambda_k}.
	\label{eq:parseval}
	\end{align}
\end{enumerate}

\section{Numerical computation of the (I)GNFT}~\label{sec:algorithms}
We extend existing numerical algorithms that compute the (I)NFT to include multiple eigenvalues.
\subsection{Direct GNFT}
Most algorithms that compute the direct NFT discretize the Zakharot-Shabat system~\eqref{eq:zs} to find $a(\lambda)$ and $b(\lambda)$ from~\eqref{eq:ab_lim}. Let $u=(u_1, u_2)^T$, where $u_1(t, \lambda)=v_1^2(t, \lambda)e^{j\lambda t}$ and $u_2(t, \lambda)=v_2^2(t, \lambda)e^{-j\lambda t}$. Then from~\eqref{eq:zs} and~\eqref{eq:ab_lim} we have
\begin{align}
u_t(t, \lambda)&={\left(\begin{matrix}
0 & q(t)e^{2j\lambda t} \\ -q^*(t)e^{-2j\lambda t} & 0
\end{matrix}\right)}u(t, \lambda)
\label{eq:zs_ab} \\
\left(\begin{matrix}a(\lambda) \\ b(\lambda)\end{matrix}\right)&=\lim\limits_{t\to\infty}\left(\begin{matrix}u_1(t, \lambda) \\ u_2(t, \lambda) \end{matrix}\right).
\end{align}
To compute the GNFT of $q(t)$, we discretize the time axis for $t\in\left[t_1,t_2\right]$. Let $t_n=t_1+n\epsilon$, $q_n=q(t_n)$, where $n\inset{0}{N-1}$, $N$ is the number of samples, and $\epsilon=(t_2-t_1)/(N-1)$ is the step size. Similarly, let $u[n]=u(t_1+n\epsilon, \lambda)$. Starting at $u[0]=(1, 0)^T$ (see~\eqref{eq:bc}), the following update step is applied iteratively:
\begin{equation}
u[n+1]=A[n]u[n],\ \  n\inset{0}{N-2}
\label{eq:update}
\end{equation}
and we have $a(\lambda)=u_1[N-1]$ and $b(\lambda)=u_2[N-1]$. The kernel $A[n]$ depends on the discretization algorithm~\cite{mansoor_all}. We consider the trapezoidal kernel proposed in~\cite{aref_control_detection}:
\begin{equation}
A[n]=\left(\begin{matrix}
\cos\left(\left|q_n\right|\epsilon\right) & \sin\left(\left|q_n\right|\epsilon\right)e^{j\left(\theta_n+2\lambda t_n\right)} \\ -\sin\left(\left|q_n\right|\epsilon\right)e^{-j\left(\theta_n+2\lambda t_n\right)} & \cos\left(\left|q_n\right|\epsilon\right)
\end{matrix}\right)
\label{eq:trap}
\end{equation}
where $\theta_n=\arg q_n$. However, the following analysis is valid for any kernel $A[n]$. 
To obtain the norming constants $Q_{k\ell}$, we need to calculate higher order $\lambda$-derivatives of $a(\lambda)$ and $b(\lambda)$. We obtain bounds on the order of the required derivatives.
\begin{lemma}\label{th:a_lambda}
	The value of $q_{k\ell}$ in~\eqref{eq:Q_kl} depends on $\lambda_k$ only through $a^{(m)}(\lambda_k)$ for $m\inset{L_k}{2L_k-\ell-1}$ and $b^{(n)}(\lambda_k)$ for $n\inset{0}{L_k-\ell-1}$.
\end{lemma}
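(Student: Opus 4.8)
The plan is to remove the singular prefactor in \eqref{eq:Q_kl} by peeling off the zero of $a$, thereby reducing $Q_{k\ell}$ to a derivative of an analytic quotient, and then to track exactly which derivative orders of $a$ and $b$ enter. First, since $\lambda_k$ is a zero of $a$ of multiplicity $L_k$, I would factor $a(\lambda)=(\lambda-\lambda_k)^{L_k}\tilde a(\lambda)$ with $\tilde a$ analytic near $\lambda_k$ and $\tilde a(\lambda_k)\neq 0$. Substituting into \eqref{eq:Q_kl}, the explicit factor $(\lambda-\lambda_k)^{L_k}$ cancels, so that with $p:=L_k-\ell-1$,
\[
Q_{k\ell}=\frac{j^\ell}{p!}\,\frac{\mathrm{d}^p}{\mathrm{d}\lambda^p}\!\left[\frac{b(\lambda)}{\tilde a(\lambda)}\right]_{\lambda=\lambda_k}.
\]
This already isolates the conceptual point: after cancellation no explicit $\lambda_k$ survives, so $\lambda_k$ enters only through the point at which derivatives are evaluated.

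Second, writing $g=b/\tilde a$ and differentiating the identity $b=\tilde a\,g$ with the Leibniz rule yields
\[
b^{(p)}(\lambda_k)=\sum_{i=0}^{p}\binom{p}{i}\tilde a^{(i)}(\lambda_k)\,g^{(p-i)}(\lambda_k).
\]
Because $\tilde a(\lambda_k)\neq 0$, I can solve this for $g^{(p)}(\lambda_k)$ in terms of $g^{(0)},\ldots,g^{(p-1)}$ and $b^{(p)},\tilde a^{(1)},\ldots,\tilde a^{(p)}$, all evaluated at $\lambda_k$. A short induction on $p$ then shows that $g^{(p)}(\lambda_k)$, and hence $Q_{k\ell}$, is a function of $b^{(n)}(\lambda_k)$ for $n=0,\ldots,p$ and of $\tilde a^{(i)}(\lambda_k)$ for $i=0,\ldots,p$. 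The $b$-range is exactly $\{0,\ldots,L_k-\ell-1\}$, matching the claim.

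Finally, I would convert the $\tilde a$-derivatives back into $a$-derivatives. Matching Taylor coefficients in $a=(\lambda-\lambda_k)^{L_k}\tilde a$ gives $\tilde a^{(i)}(\lambda_k)=\frac{i!}{(L_k+i)!}\,a^{(L_k+i)}(\lambda_k)$, while all lower derivatives $a^{(m)}(\lambda_k)$ with $m<L_k$ vanish by definition of the multiplicity and therefore carry no information. Hence the required $\tilde a^{(i)}(\lambda_k)$ for $i=0,\ldots,p$ correspond precisely to $a^{(m)}(\lambda_k)$ for $m=L_k,\ldots,L_k+p=2L_k-\ell-1$, which completes the stated dependence.

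The step I expect to require the most care is the bookkeeping in the induction: verifying that the recursion never calls for a derivative of either factor beyond order $p$, so that the orders cannot creep above $2L_k-\ell-1$ for $a$ or above $L_k-\ell-1$ for $b$. The underlying idea — that the singular prefactor cancels the zero of $a$ and leaves an analytic quotient — is straightforward; the remaining work is careful index tracking together with the elementary coefficient relation between $\tilde a$ and $a$.
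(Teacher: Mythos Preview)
Your proposal is correct and takes a genuinely different, cleaner route than the paper. The paper expands $\mathrm{d}^{L_k-\ell-1}[(\lambda-\lambda_k)^{L_k}b/a]/\mathrm{d}\lambda^{L_k-\ell-1}$ directly with a higher-order quotient rule obtained from Fa\`a di Bruno's formula, which leaves powers of $a$ in the denominator; it then applies L'H\^opital's rule $L_k(L_k-\ell)$ times and performs a combinatorial count of the ``$a$-factors'' in the resulting sum to bound the maximal derivative orders that survive at $\lambda_k$. Your idea of factoring $a=(\lambda-\lambda_k)^{L_k}\tilde a$ up front cancels the singularity immediately, so you differentiate the analytic quotient $b/\tilde a$ only $p=L_k-\ell-1$ times and read off the required orders from Leibniz plus the Taylor identity $\tilde a^{(i)}(\lambda_k)=\frac{i!}{(L_k+i)!}a^{(L_k+i)}(\lambda_k)$. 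This avoids Fa\`a di Bruno, avoids L'H\^opital, and replaces the paper's factor-counting argument with a two-line induction. The paper's approach has one practical advantage: its intermediate expressions are stated entirely in terms of $a$ and its derivatives, which aligns with how the forward--backward algorithm actually computes things and yields the explicit formulas such as~\eqref{eq:norming_constants_L2} without an extra translation step; your $\tilde a$ is convenient for the proof but would still need to be unwound for implementation.
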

\begin{proof}
	See Appendix~\ref{app:a_lambda}.
\end{proof}
For an eigenvalue of multiplicity $L_k$, we compute the first $2L_k-1$ derivatives of $u[N-1]$ by setting the following additional initial conditions and update steps
\begin{subequations}
\begin{align}
u^{(m)}[0]&=\left(\begin{matrix}
0\\0
\end{matrix}\right), \qquad m\inset{1}{2L_k-1} \\
u^{(m)}[n+1]&=\sum_{r=0}^{m}\binom{m}{r}A^{(r)}[n]u^{(m-r)}[n]
\end{align}%
\label{eq:update_deriv}%
\end{subequations}%
where $A^{(r)}[n]$, the $r$-th order $\lambda$-derivative of $A[n]$, is obtained in closed form. 
Once we have the required values of $a, b$ and their derivatives, we use~\eqref{eq:Q_kl} to compute the norming constants. In~\eqref{eq:Q_kl}, the derivative is evaluated in closed form, and then L'H\^opital's rule is applied repetitively to obtain an expression for $t_{k\ell}$ that depends only on nonzero derivatives of $a$. See~\eqref{eq:g}-\eqref{eq:rr} for details. For $L_k=2$, this gives
\begin{subequations}	
\begin{align}
Q_{k1}&=\frac{j2b(\lambda_k)}{a_{\lambda\lambda}(\lambda_k)} \\
Q_{k0}&=\frac{2b_{\lambda}(\lambda_k)}{a_{\lambda\lambda}(\lambda_k)}-\frac{2}{3}\frac{b(\lambda_k)a_{\lambda\lambda\lambda}(\lambda_k)}{a_{\lambda\lambda}(\lambda_k)^2}.
\end{align}
\label{eq:norming_constants_L2}
\end{subequations}

\subsubsection*{Forward-Backward Method}
This technique was proposed in~\cite{aref_control_detection} to improve numerical stability. We write~\eqref{eq:update} as
\begin{equation}
\left(\begin{matrix}
a(\lambda) \\ b(\lambda)
\end{matrix}\right)=A[N-1]\cdots A[1]A[0]\left(\begin{matrix}
1 \\ 0
\end{matrix}\right)=RL\left(\begin{matrix}
1 \\ 0
\end{matrix}\right)
\label{eq:RL}
\end{equation}
where $R=A[N-1]\cdots A[n_0]$ and $L=A[n_0-1]\cdots A[0]$, and $n_0$ is chosen according to some criterion to minimize the numerical error. The iterative procedure~\eqref{eq:update} is run \textit{forward} up to $n_0-1$ to obtain
\begin{equation}
\left(\begin{matrix}
l_1 \\ l_2
\end{matrix}\right)=L\left(\begin{matrix}
1 \\ 0
\end{matrix}\right)=\left(\begin{matrix}
L_{11} \\ L_{21}
\end{matrix}\right)
\end{equation}
and \textit{backward} from $r[N-1]=(0, 1)^T$ down to $r[n_0-1]$:
\begin{equation}
\left(\begin{matrix}
r_1 \\ r_2
\end{matrix}\right)=R^{-1}\left(\begin{matrix}
0 \\ 1
\end{matrix}\right)=\left(\begin{matrix}
-R_{12} \\ R_{11}
\end{matrix}\right).
\label{eq:r}
\end{equation}
The kernel $A[n]^{-1}$ is used to compute~\eqref{eq:r}: for the trapezoidal case this amounts to replacing $\epsilon$ with $-\epsilon$ in~\eqref{eq:trap}. Note that~\eqref{eq:r} is valid only for kernels with unit determinant.

Using~\eqref{eq:update_deriv}, we obtain $r_1$, $r_2$, $l_1$, $l_2$, and their derivatives up to order $2L_k-1$. From~\eqref{eq:RL} we have
\begin{equation}
a(\lambda)=R_{11}L_{11}+R_{12}L_{21}
\end{equation}
and we compute
\begin{equation}
a^{(\ell)}(\lambda_k)=\sum_{m=0}^{\ell}\binom{\ell}{m}\left(r_2^{(m)}l_1^{(\ell-m)}-r_1^{(m)}l_2^{(\ell-m)}\right).
\label{eq:a_fb}
\end{equation}
To obtain $b^{(\ell)}(\lambda_k)$, note that
\begin{align}
b(\lambda_k)&=R_{21}L_{11}+R_{22}L_{21} \nonumber \\
&=\frac{R_{21}}{R_{11}}\left(R_{11}L_{11}+R_{12}L_{21}\right)+\frac{L_{21}}{R_{11}} \nonumber \\
&=\frac{R_{21}}{R_{11}}a(\lambda_k)+\frac{L_{21}}{R_{11}}
\label{eq:b}
\end{align}
where we used $R_{22}=\left(1+R_{12}R_{21}\right)/R_{11}$. The $\ell$-th derivative of the left summand in~\eqref{eq:b} is $0$ for $\ell\le L_k-1$, because $a^{(\ell)}(\lambda_k)=0$ for $\ell\le L_k-1$ . Therefore, we have
\begin{equation}
b^{(\ell)}(\lambda_k)=\left.\derivk{}{\lambda}{\ell}\frac{l_2}{r_2}\right|_{\lambda=\lambda_k}
\label{eq:b_fb}
\end{equation}
which can be written in closed form using~\eqref{eq:quotient} below. Equations~\eqref{eq:b_fb} and~\eqref{eq:a_fb}, together with~\eqref{eq:Q_kl} or~\eqref{eq:norming_constants_L2}, let us compute the GNFT from the forward-backward method.

\subsection{Inverse GNFT}
The inverse GNFT can be computed by solving the generalized Gelfand-Levitan-Marchenko equation (GLME)~\cite{le_glme}:
\begin{multline}
K(t, y)-\mathrm{\Omega}^*(t+y) \\ +\int_{t}^{\infty}\diff x\int_{t}^{\infty}\diff s\;K(t, s)\mathrm{\Omega}(s+x)^*\mathrm{\Omega}(x+y)=0.
\label{eq:glme}
\end{multline}
The kernel $\mathrm{\Omega}(y)$ is given by~\cite{aktosun_ho}:
\begin{equation}
\mathrm{\Omega}(y)=\frac{1}{2\pi}\int_{-\infty}^{\infty}Q_c(\lambda)e^{j\lambda y}\diff\lambda+\sum_{k=1}^{K}\sum_{\ell=0}^{L_k-1}Q_{k\ell}\frac{y^\ell}{\ell!}e^{j\lambda_k y}.
\label{eq:Omega}
\end{equation}
The inverse GNFT is then obtained as
\begin{equation}
q(t)=-2K(t, t).
\label{eq:q}
\end{equation}
The derivation of~\eqref{eq:glme}-\eqref{eq:q} is given in~\cite{ablowitz_ist} and is based on expressing $v^2(t)=v^2(-\infty)+\int_{-\infty}^{t} K(t, s)e^{-j\lambda s}\diff{s}$ and substituting in~\eqref{eq:zs}. A numerical procedure to solve~\eqref{eq:glme} is given in~\cite[Section 4.2]{le_glme}: it suffices to replace $F(y)$ by $\mathrm{\Omega}(y)$.

When there is no continuous spectrum, a closed-form expression is given in~\cite{aktosun_ho} for the generalized $K$-solitons:
\begin{align}
&q(z, t)= \nonumber\\ &-2\mathbf{b}^He^{-\mathbf{\Lambda}^Ht}\left(\mathbf{I}+\mathbf{M}(z, t)\mathbf{N}(t)\right)^{-1}e^{-\mathbf{\Lambda}^Ht+4j\left(\mathbf{\Lambda}^H\right)^2z}\mathbf{c}
\label{eq:q_closed_form}
\end{align}
where
\begin{equation}
\mathbf{\Lambda}=\left(\begin{matrix}
\mathbf{\Lambda}_1 & 0 & \cdots & 0 \\
0 & \mathbf{\Lambda}_2 & \cdots & 0 \\
\vdots & \vdots & \ddots & \vdots \\
0 & \cdots & 0 & \mathbf{\Lambda}_K
\end{matrix}\right)
\end{equation}
\begin{equation*}
{\setlength{\arraycolsep}{3pt}
\mathbf{b}=\left(\begin{matrix}
\mathbf{b}_1^T & \cdots & \mathbf{b}_K^T
\end{matrix}\right)^T, \  \mathbf{b}_k=\left(\begin{matrix}
0 & \cdots & 0 & 1
\end{matrix}\right)^T\in\left\{0,1\right\}^{L_k\times 1}}
\end{equation*}
\begin{equation*}
\mathbf{c}=\left(\begin{matrix}
\mathbf{c}_1^T & \cdots & \mathbf{c}_K^T
\end{matrix}\right), \quad \mathbf{c}_k=\left(\begin{matrix}
Q_{k, (L_k-1)}^* & \cdots & Q_{k0}^*
\end{matrix}\right)^T
\end{equation*}
\begin{align}
&\mathbf{M}(z, t)=\int_{t}^{\infty}e^{-\mathbf{\Lambda}^H s+4j\left(\mathbf{\Lambda}^H\right)^2 z}\mathbf{c}\mathbf{c}^He^{-\mathbf{\Lambda} s-4j\mathbf{\Lambda}^2 z}\diff s
\label{eq:M} \\
&\mathbf{N}(t)=\int_t^\infty e^{-\mathbf{\Lambda}x}\mathbf{b}\mathbf{b}^He^{-\mathbf{\Lambda}^H x}\diff x.
\label{eq:N}
\end{align}
$\mathbf{\Lambda}_k$ is given by~\eqref{eq:Lambda_k}, and $\mathbf{I}$ is an identity matrix of size $\sum_k L_k$. The integrals~\eqref{eq:M} and~\eqref{eq:N} must be computed numerically.

\subsection{Example: Double Soliton (DS)}
From~\eqref{eq:q_closed_form}, a soliton with a second order eigenvalue at $\lambda=\xi+j\eta$ and norming constants $Q_{11}$ and $Q_{10}$ is given by
\begin{equation}
q(z, t)=\frac{h(z, t)}{f(z, t)}
\label{eq:double_soliton}
\end{equation}
where
\begin{multline}
h(z, t)=-j4\eta e^{-j\arg Q_{11}}e^{-j2\xi t}e^{-j4\left(\xi^2-\eta^2\right)z}\left\{\vphantom{\left[\left|Q_{11}\right|^2\right]} \right.  \\ \left. e^{-X}\left[-\left|Q_{11}\right|^2\left(2\eta t+8\eta\left(\xi+j\eta\right) z+2\right)-\eta Q_{11}^*Q_{10}\right] \right. \\ \left.+e^{X}\left[\left|Q_{11}\right|^2\left(2\eta t+8\eta\left(\xi-j\eta\right) z\right)+\eta Q_{11}Q_{10}^*\right]\right\}
\end{multline}
\begin{align}
f(z, t)=&\left|Q_{11}\right|^2\left[\cosh\left(2X\right)+1\right] \nonumber \\ & +2\left|Q_{10}\eta+Q_{11}\left(2\eta t+8\eta\left(\xi+j\eta\right) z+1\right)\right|^2
\end{align}
\begin{equation}
X=2\eta t+8\eta\xi z-\log\frac{|Q_{11}|}{4\eta^2}.
\end{equation}
We refer to this soliton as a \textit{double soliton} (DS). The evolution of the norming constants~\eqref{eq:time_evolution} reduces to
\begin{subequations}
\begin{align}
Q_{11}(z)&=Q_{11}(0)e^{4j\lambda^2 z} \\
Q_{10}(z)&=\left(Q_{10}(0)+8\lambda zQ_{11}(0)\right)e^{4j\lambda^2 z}.
\label{eq:Q_10_ev}
\end{align}
\end{subequations}
Note from~\eqref{eq:double_soliton} that a DS does \textit{not} exhibit periodic (breathing) behavior in $z$. The monotonic growth of some norming constants with $z$ suggests that, generally, solitons with higher multiplicity eigenvalues do not breathe. 

The following expression for the \textit{center time} of the DS was obtained empirically and seems to be valid based on our simulations, but we have not found a proof:
\begin{equation}
\frac{\int_{-\infty}^{\infty}t\left|q(z, t)\right|^2}{\int_{-\infty}^{\infty}\left|q(z, t)\right|^2}=\frac{1}{2\eta}\log\left(\frac{\left|Q_{11}(z)\right|}{4\eta^2}\right).
\label{eq:center_time}
\end{equation}
This definition of the center time has proven useful for the analysis of the propagation of trains of ordinary solitons~\cite{prilepsky_train}.
%

\section{Effect of implementation limitations}\label{sec:implementation}
Eigenvalues of higher multiplicity require the pulse to have the exact shape given by, e.g.,~\eqref{eq:q_closed_form} or ~\eqref{eq:double_soliton}. Any deviation caused by practical limitations such as discretization, truncation, attenuation or noise splits an eigenvalue $\lambda_k$ of multiplicity $L_k$ into $L_k$ very closely spaced eigenvalues $\lambda_{k,\ell}$. However, the computation of the spectral amplitudes
\begin{equation}
Q_d(\lambda_{k,\ell})=\frac{b(\lambda_{k,\ell})}{a_{\lambda}(\lambda_{k,\ell})}
\end{equation}
is unstable, because the denominator is close to $0$. The norming constants $Q_{k,\ell}$ of the GNFT seem to be more stable, even when the actual signal has two closely spaced eigenvalues instead of one eigenvalue of multiplicity $2$. Furthermore, the search algorithm might not be able to distinguish eigenvalues that are too close. We next illustrate these observations.

\subsection{Effect of pulse truncation}\label{sec:truncation}
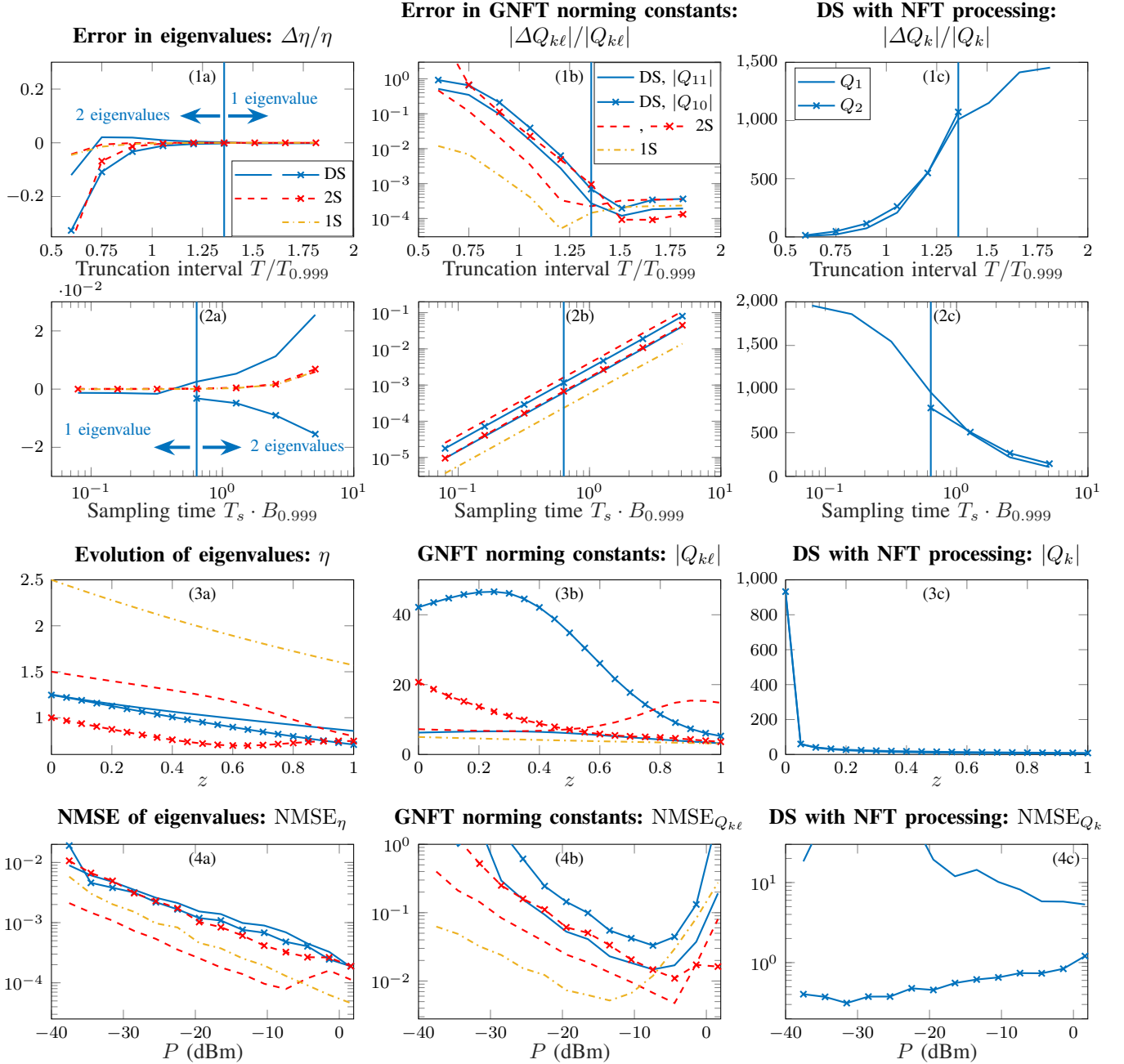
\begin{figure*}[!t]\centering
	\begin{tikzpicture}[node distance=3em]
		\newlength{\verticaldistance}
		\setlength{\verticaldistance}{3em}
		\setlength{\figurewidth}{0.285\textwidth}
		\setlength{\figureheight}{0.55\figurewidth}
		\pgfplotsset{yticklabel style={text width=2em,align=right}}
%
%
\definecolor{mycolor1}{rgb}{0.00000,0.44700,0.74100}%
\definecolor{mycolor2}{rgb}{1,0,0}%
\definecolor{mycolor3}{rgb}{0.92900,0.69400,0.12500}%
\definecolor{mycolor4}{rgb}{0.49400,0.18400,0.55600}%
\definecolor{mycolor5}{rgb}{0.46600,0.67400,0.18800}%
%

\begin{axis}[%
name=eig_trunc_imag,
width=0.951\figurewidth,
height=\figureheight,
scale only axis,
unbounded coords=jump,
xmin=0.5,
xmax=2,
xtick={0.5, 0.75, 1, 1.25, 1.5, 1.75, 2},
xlabel style={yshift=0.5em, font=\color{white!15!black}},
xlabel={Truncation interval $T/T_{0.999}$},
ymin=-0.35,
ymax=0.3,
axis background/.style={fill=white},
 title style={yshift=-0.3em, font=\bfseries},
 title={Error in eigenvalues: $\Delta\eta/\eta$},
legend columns=2,
legend style={at={(0.999, 0.001)}, anchor=south east, legend cell align=left, align=left, draw=white!15!black, inner sep=0.1em},
]
\addplot [color=mycolor1, thick]
table[row sep=crcr]{%
	0.598170731707317	-0.120227259168262\\
	0.74997921286031	0.0206497658894589\\
	0.901787694013304	0.0194161504674392\\
	1.0535961751663	0.00986263767645816\\
	1.20540465631929	0.00423791588821878\\
	1.35721313747228	0.0021169206688592\\
	1.50902161862528	-0.00218860477159666\\
	1.66083009977827	-0.00176963197995743\\
	1.81263858093126	-0.00171824029826322\\
};
\addlegendentry{}

\addplot [color=mycolor1, thick, mark=x]
table[row sep=crcr]{%
0.598170731707317	-0.326167545368237\\
0.74997921286031	-0.108463582241827\\
0.901787694013304	-0.0327444599162444\\
1.0535961751663	-0.0112034689169022\\
1.20540465631929	-0.0043841100807926\\
1.35721313747228	-0.00237202126476745\\
1.50902161862528	nan\\
1.66083009977827	nan\\
1.81263858093126	nan\\
};
\addlegendentry{DS}

\addplot [color=mycolor2, dashed, thick]
table[row sep=crcr]{%
	0.598170731707317	-0.0428451428782443\\
	0.74997921286031	-0.0064164183926804\\
	0.901787694013304	-0.000881072646176614\\
	1.0535961751663	0\\
	1.20540465631929	0\\
	1.35721313747228	0\\
	1.50902161862528	0\\
	1.66083009977827	0\\
	1.81263858093126	0\\
};
\addlegendentry{}

\addplot [color=mycolor2, dashed, thick, mark=x, mark options={solid}]
table[row sep=crcr]{%
	0.598170731707317	-0.371647746890672\\
	0.74997921286031	-0.0683811104577651\\
	0.901787694013304	-0.0138626203065837\\
	1.0535961751663	-0.00271792964697093\\
	1.20540465631929	0\\
	1.35721313747228	0\\
	1.50902161862528	0\\
	1.66083009977827	0\\
	1.81263858093126	0\\
};
\addlegendentry{2S}
\addlegendimage{empty legend}
\addlegendentry{}
\addplot [color=mycolor3, dashdotted, thick]
table[row sep=crcr]{%
	0.598170731707317	-0.0455358278062457\\
	0.74997921286031	-0.0142980648073264\\
	0.901787694013304	-0.0044545255501049\\
	1.0535961751663	-0.00131592407894239\\
	1.20540465631929	0\\
	1.35721313747228	0\\
	1.50902161862528	0\\
	1.66083009977827	0\\
	1.81263858093126	0\\
};
\addlegendentry{1S}

\draw [color=mycolor1, thick] (axis cs:1.35721313747228, -0.35) -- (1.35721313747228, 0.3);
\draw [color=mycolor1, thick, line width=0.15em, arrows={-Stealth}] (1.33721313747228, 0.1) -- (1.13721313747228, 0.1) node [left] {\textcolor{mycolor1}{2 eigenvalues}};
\draw [color=mycolor1, thick, line width=0.15em, arrows={-Stealth}] (1.37721313747228, 0.1) -- (1.57721313747228, 0.1) node [anchor=250] {\textcolor{mycolor1}{1 eigenvalue}};
\node at (1.25, 0.3) [anchor=north] {(1a)};
\end{axis}
%
%
\definecolor{mycolor1}{rgb}{0.00000,0.44700,0.74100}%
\definecolor{mycolor2}{rgb}{1,0,0}%
\definecolor{mycolor3}{rgb}{0.92900,0.69400,0.12500}%
\definecolor{mycolor4}{rgb}{0.49400,0.18400,0.55600}%
\definecolor{mycolor5}{rgb}{0.46600,0.67400,0.18800}%
%

\begin{axis}[%
name=eig_trunc_normconst,
at=(eig_trunc_imag.east),
anchor=west,
xshift=\nodedistance,
width=0.951\figurewidth,
height=\figureheight,
scale only axis,
xmin=0.5,
xmax=2,
xtick={0.5, 0.75, 1, 1.25, 1.5, 1.75, 2},
xlabel style={yshift=0.5em, font=\color{white!15!black}},
xlabel={Truncation interval $T/T_{0.999}$},
ymode=log,
yminorticks=true,
ymin=3e-5,
ymax=3,
max space between ticks=20,
axis background/.style={fill=white},
title style={yshift=-0.3em, font=\bfseries, align=center},
title={Error in GNFT norming constants:\\ $|\Delta Q_{k\ell}|/|Q_{k\ell}|$},
legend style={at={(1, 1)}, anchor=north east, legend cell align=left, align=left, draw=white!15!black, inner sep=0.1em}
]
\addplot [color=mycolor1, thick]
table[row sep=crcr]{%
0.598170731707317	0.515680761558703\\
0.74997921286031	0.346696443452872\\
0.901787694013304	0.096707642469018\\
1.0535961751663	0.0173207554483102\\
1.20540465631929	0.00274661118140301\\
1.35721313747228	0.000272413728465466\\
1.50902161862528	0.00012003501078155\\
1.66083009977827	0.000183782546801297\\
1.81263858093126	0.000193723092290128\\
};
\addlegendentry{DS, $|Q_{11}|$}

\addplot [color=mycolor1, thick, mark=x]
table[row sep=crcr]{%
0.598170731707317	0.923340319947913\\
0.74997921286031	0.651282232842732\\
0.901787694013304	0.21030921837024\\
1.0535961751663	0.0392136057661241\\
1.20540465631929	0.00630128970246154\\
1.35721313747228	0.000693388159986041\\
1.50902161862528	0.000196790706926557\\
1.66083009977827	0.000341912876976515\\
1.81263858093126	0.000364482947193029\\
};
\addlegendentry{DS, $|Q_{10}|$}

\addplot [color=mycolor2, thick, dashed]
table[row sep=crcr]{%
0.598170731707317	0.46612430490443\\
0.74997921286031	0.118012327985121\\
0.901787694013304	0.0209544310265662\\
1.0535961751663	0.00346912934176348\\
1.20540465631929	0.000334659006052344\\
1.35721313747228	0.000223778851981839\\
1.50902161862528	0.000328619200990366\\
1.66083009977827	0.000347476671871575\\
1.81263858093126	0.000353135744114454\\
};

\addplot [color=mycolor2, thick, dashed, mark=x, mark options={solid}, forget plot]
table[row sep=crcr]{%
0.598170731707317	25.4059975460101\\
0.74997921286031	0.677035228180539\\
0.901787694013304	0.112567900562754\\
1.0535961751663	0.0228949633190717\\
1.20540465631929	0.00490286758976334\\
1.35721313747228	0.000947605040145253\\
1.50902161862528	9.28393597058215e-05\\
1.66083009977827	9.20060973064646e-05\\
1.81263858093126	0.000132457543245711\\
};\label{2S_2}
\addlegendentry{, \begin{tikzpicture}\draw[dashed, thick, color=mycolor2] (0em,0em) -- (2.2em,0em); \draw[thick, color=mycolor2] (0.9em,0.2em) -- (1.3em,-0.2em);
	\draw[thick, color=mycolor2] (0.9em,-0.2em) -- (1.3em,0.2em);
 \end{tikzpicture} 2S}

\addplot [color=mycolor3, thick, dashdotted]
table[row sep=crcr]{%
	0.598170731707317	0.0119178611361189\\
	0.74997921286031	0.00684388360244022\\
	0.901787694013304	0.00171824948370016\\
	1.0535961751663	0.000407859774424146\\
	1.20540465631929	5.10108496309059e-05\\
	1.35721313747228	0.00014527671100808\\
	1.50902161862528	0.000211771951554773\\
	1.66083009977827	0.00022641104114669\\
	1.81263858093126	0.000235319073476425\\
};
\addlegendentry{1S}

\draw [color=mycolor1, thick] (1.35721313747228, 3e-5) -- (1.35721313747228, 3);
\node at (1.25, 3) [anchor=north] {(1b)};
\end{axis}
%
%
\definecolor{mycolor1}{rgb}{0.00000,0.44700,0.74100}%
\definecolor{mycolor2}{rgb}{1,0,0}%
\definecolor{mycolor3}{rgb}{0.92900,0.69400,0.12500}%
\definecolor{mycolor4}{rgb}{0.49400,0.18400,0.55600}%
\definecolor{mycolor5}{rgb}{0.46600,0.67400,0.18800}%
%

\begin{axis}[%
name=eig_trunc_specamp,
at=(eig_trunc_normconst.east),
anchor=west,
xshift=\nodedistance,
width=0.951\figurewidth,
height=\figureheight,
scale only axis,
unbounded coords=jump,
xmin=0.5,
xmax=2,
xtick={0.5, 0.75, 1, 1.25, 1.5, 1.75, 2},
xlabel style={yshift=0.5em, font=\color{white!15!black}},
xlabel={Truncation interval $T/T_{0.999}$},
ymin=0,
ymax=1500,
axis background/.style={fill=white},
title style={yshift=-0.3em, font=\bfseries, align=center},
title={DS with NFT processing:\\ $|\Delta Q_k|/|Q_k|$},
legend style={at={(0.03, 0.97)}, anchor=north west, legend cell align=left, align=left, draw=white!15!black, inner sep=0.1em}
]
\addplot [color=mycolor1, thick]
table[row sep=crcr]{%
	0.598170731707317	7.9717608505307\\
	0.74997921286031	21.0582373234987\\
	0.901787694013304	74.1701310957171\\
	1.0535961751663	208.878968818956\\
	1.20540465631929	550.987170501297\\
	1.35721313747228	1009.54115683297\\
	1.50902161862528	1151.47014239183\\
	1.66083009977827	1413.61007842376\\
	1.81263858093126	1454.31159864653\\
};
\addlegendentry{$Q_1$}

\addplot [color=mycolor1, thick, mark=x]
table[row sep=crcr]{%
0.598170731707317	14.0697264494945\\
0.74997921286031	47.8339277675339\\
0.901787694013304	115.08714800546\\
1.0535961751663	262.0107523445\\
1.20540465631929	549.723455186477\\
1.35721313747228	1073.7045371667\\
1.50902161862528	nan\\
1.66083009977827	nan\\
1.81263858093126	nan\\
};
\addlegendentry{$Q_2$}

\draw [color=mycolor1, thick] (1.35721313747228, 0) -- (1.35721313747228, 1500);

%
%

\node at (1.25, 1500) [anchor=north] {(1c)};
\end{axis}
%
%
\definecolor{mycolor1}{rgb}{0.00000,0.44700,0.74100}%
\definecolor{mycolor2}{rgb}{1,0,0}%
\definecolor{mycolor3}{rgb}{0.92900,0.69400,0.12500}%
\definecolor{mycolor4}{rgb}{0.49400,0.18400,0.55600}%
\definecolor{mycolor5}{rgb}{0.46600,0.67400,0.18800}%
%

\begin{axis}[%
name=eig_discr_imag,
at=(eig_trunc_imag.south),
anchor=north,
yshift=-\verticaldistance,
width=0.951\figurewidth,
height=\figureheight,
scale only axis,
unbounded coords=jump,
xmode=log,
xmin=0.05,
xmax=10,
xminorticks=true,
xlabel style={yshift=0.5em, font=\color{white!15!black}},
xlabel={Sampling time $T_s\cdot B_{0.999}$},
ymin=-0.03,
ymax=0.03,
axis background/.style={fill=white},
legend style={at={(0.03, 0.97)}, anchor=north west, legend cell align=left, align=left, draw=white!15!black}
]
\addplot [color=mycolor1, thick]
table[row sep=crcr]{%
	0.0795378169852995	-0.00128985026173307\\
	0.159075633970599	-0.00135476528356051\\
	0.318151267941198	-0.00161965745067736\\
	0.636302535882396	0.00255562561108462\\
	1.27260507176479	0.00528776901095434\\
	2.54521014352958	0.0112936084925536\\
	5.09042028705917	0.0255333846545279\\
};
\addlegendentry{Double soliton}

\addplot [color=mycolor1, thick, forget plot, mark=x]
table[row sep=crcr]{%
	0.0795378169852995	nan\\
	0.159075633970599	nan\\
	0.318151267941198	nan\\
	0.636302535882396	-0.00319579131763632\\
	1.27260507176479	-0.00482418763563146\\
	2.54521014352958	-0.00901006296333193\\
	5.09042028705917	-0.0154645076646483\\
};

\addplot [color=mycolor2, thick, dashed]
table[row sep=crcr]{%
	0.0795378169852995	0.000102219942267112\\
	0.159075633970599	0.000106406504189138\\
	0.318151267941198	0.000123165778396513\\
	0.636302535882396	0.000190292512475043\\
	1.27260507176479	0.000459490904718122\\
	2.54521014352958	0.00154810591125676\\
	5.09042028705917	0.00591381150599653\\
};
\addlegendentry{2-soliton}

\addplot [color=mycolor2, thick, dashed, forget plot, mark=x, mark options={solid}]
table[row sep=crcr]{%
	0.0795378169852995	1.55816204605586e-06\\
	0.159075633970599	5.78263744599461e-06\\
	0.318151267941198	2.6270462046174e-05\\
	0.636302535882396	0.00010500406860503\\
	1.27260507176479	0.000421481141747559\\
	2.54521014352958	0.00170594750629816\\
	5.09042028705917	0.00688332291221738\\
};

\addplot [color=mycolor3, dashdotted, thick]
table[row sep=crcr]{%
0.0795378169852995	-2.90585321922165e-05\\
0.159075633970599	-2.46813426192816e-05\\
0.318151267941198	-7.16549883339468e-06\\
0.636302535882396	6.30049871954697e-05\\
1.27260507176479	0.000343761450203495\\
2.54521014352958	0.00147624723199122\\
5.09042028705917	0.0060731494114755\\
};
\addlegendentry{1-soliton}
\legend{}
\draw [thick, color=mycolor1] (0.636302535882396, -0.03) -- (0.636302535882396, 0.03);
\draw [color=mycolor1, thick, line width=0.15em, arrows={-Stealth}] (0.566302535882396, -0.02) -- (0.3, -0.02) node [anchor=340] {\textcolor{mycolor1}{1 eigenvalue}};
\draw [color=mycolor1, thick, line width=0.15em, arrows={-Stealth}] (0.706302535882396, -0.02) -- (1.4, -0.02) node [right] {\textcolor{mycolor1}{2 eigenvalues}};
\node at (0.85, 0.03) [anchor=north] {(2a)};
\end{axis}
%
%
\definecolor{mycolor1}{rgb}{0.00000,0.44700,0.74100}%
\definecolor{mycolor2}{rgb}{1,0,0}%
\definecolor{mycolor3}{rgb}{0.92900,0.69400,0.12500}%
\definecolor{mycolor4}{rgb}{0.49400,0.18400,0.55600}%
\definecolor{mycolor5}{rgb}{0.46600,0.67400,0.18800}%
%

\begin{axis}[%
name=eig_discr_normconst,
at=(eig_trunc_normconst.south),
anchor=north,
yshift=-\verticaldistance,
width=0.951\figurewidth,
height=\figureheight,
scale only axis,
xmode=log,
xmin=0.05,
xmax=10,
xminorticks=true,
xlabel style={yshift=0.5em, font=\color{white!15!black}},
xlabel={Sampling time $T_s\cdot B_{0.999}$},
ymode=log,
yminorticks=true,
ymin=3e-6,
ymax=0.2,
max space between ticks=20,
axis background/.style={fill=white},
legend style={at={(0.97, 0.5)}, anchor=east, legend cell align=left, align=left, draw=white!15!black}
]
\addplot [color=mycolor1, thick]
table[row sep=crcr]{%
	0.0795378169852995	9.46798944497118e-06\\
	0.159075633970599	3.83383777413826e-05\\
	0.318151267941198	0.000153894661662832\\
	0.636302535882396	0.000616847403333765\\
	1.27260507176479	0.00247658053336551\\
	2.54521014352958	0.010013615082852\\
	5.09042028705917	0.0415596215196061\\
};
\addlegendentry{DS, $|Q_{11}|$}

\addplot [color=mycolor1, thick, mark=x]
table[row sep=crcr]{%
0.0795378169852995	1.77854839378609e-05\\
0.159075633970599	7.22336994217809e-05\\
0.318151267941198	0.000290204091752976\\
0.636302535882396	0.00116387014272558\\
1.27260507176479	0.00467871340051247\\
2.54521014352958	0.0189974483182297\\
5.09042028705917	0.0801061656627365\\
};
\addlegendentry{DS, $|Q_{11}|$}

\addplot [color=mycolor2, thick, dashed]
table[row sep=crcr]{%
	0.0795378169852995	2.54048395127787e-05\\
	0.159075633970599	0.000101707317145972\\
	0.318151267941198	0.000406854907355269\\
	0.636302535882396	0.00162683233691976\\
	1.27260507176479	0.00657882709165915\\
	2.54521014352958	0.026945389103974\\
	5.09042028705917	0.109113684141927\\
};
\addlegendentry{2S}

\addplot [color=mycolor2, thick, dashed, mark=x, mark options={solid}, forget plot]
table[row sep=crcr]{%
	0.0795378169852995	9.53495615035066e-06\\
	0.159075633970599	4.08656668612672e-05\\
	0.318151267941198	0.00016651545347314\\
	0.636302535882396	0.000671853686945931\\
	1.27260507176479	0.00267278558441025\\
	2.54521014352958	0.0107419047465766\\
	5.09042028705917	0.044925086206681\\
};

\addplot [color=mycolor3, thick, dashdotted]
table[row sep=crcr]{%
	0.0795378169852995	3.65028901434528e-06\\
	0.159075633970599	1.45589522126954e-05\\
	0.318151267941198	5.84219064741376e-05\\
	0.636302535882396	0.000233960353532581\\
	1.27260507176479	0.000936104093927348\\
	2.54521014352958	0.00359459552404946\\
	5.09042028705917	0.0138476927468\\
};
\addlegendentry{1S}

\legend{}
\draw [thick, color=mycolor1] (0.636302535882396, 3e-6) -- (0.636302535882396, 0.2);
\node at (0.85, 0.2) [anchor=north] {(2b)};
\end{axis}
%
%
\definecolor{mycolor1}{rgb}{0.00000,0.44700,0.74100}%
\definecolor{mycolor2}{rgb}{1,0,0}%
\definecolor{mycolor3}{rgb}{0.92900,0.69400,0.12500}%
\definecolor{mycolor4}{rgb}{0.49400,0.18400,0.55600}%
\definecolor{mycolor5}{rgb}{0.46600,0.67400,0.18800}%
%

\begin{axis}[%
name=eig_discr_specamp,
at=(eig_trunc_specamp.south),
anchor=north,
yshift=-\verticaldistance,
width=0.951\figurewidth,
height=\figureheight,
scale only axis,
unbounded coords=jump,
xmode=log,
xmin=0.05,
xmax=10,
xminorticks=true,
xlabel style={yshift=0.5em, font=\color{white!15!black}},
xlabel={Sampling time $T_s\cdot B_{0.999}$},
ymin=0,
ymax=2000,
axis background/.style={fill=white},
legend style={legend cell align=left, align=left, draw=white!15!black}
]
\addplot [color=mycolor1, thick]
table[row sep=crcr]{%
	0.0795378169852995	1956.40790985574\\
	0.159075633970599	1858.71639444459\\
	0.318151267941198	1545.10662462549\\
	0.636302535882396	961.867221931418\\
	1.27260507176479	491.850219104728\\
	2.54521014352958	218.822557204211\\
	5.09042028705917	106.99021713489\\
};
\addlegendentry{$Q_1$}

\addplot [color=mycolor1, thick, forget plot, mark=x]
table[row sep=crcr]{%
	0.0795378169852995	nan\\
	0.159075633970599	nan\\
	0.318151267941198	nan\\
	0.636302535882396	783.89586549089\\
	1.27260507176479	507.263110793415\\
	2.54521014352958	265.305229944966\\
	5.09042028705917	146.090918494646\\
};
\addlegendentry{$Q_2$}

%
%
\legend{}
\draw [thick, color=mycolor1] (axis cs:0.636302535882396, 0) -- (axis cs:0.636302535882396, 2000);
\node at (0.85, 2000) [anchor=north] {(2c)};
\end{axis}
		\setlength{\verticaldistance}{3.2em}
%
%
\definecolor{mycolor1}{rgb}{0.00000,0.44700,0.74100}%
\definecolor{mycolor2}{rgb}{1,0,0}%
\definecolor{mycolor3}{rgb}{0.92900,0.69400,0.12500}%
\definecolor{mycolor4}{rgb}{0.49400,0.18400,0.55600}%
\definecolor{mycolor5}{rgb}{0.46600,0.67400,0.18800}%
%

\begin{axis}[%
name=eig_att_imag,
at=(eig_discr_imag.south),
anchor=north,
yshift=-1.5\verticaldistance,
width=0.951\figurewidth,
height=\figureheight,
scale only axis,
xmin=0,
xmax=1,
xlabel style={yshift=0.5em, font=\color{white!15!black}},
xlabel={$z$},
ymin=0.6,
ymax=2.5,
axis background/.style={fill=white},
title style={yshift=-0.3em, font=\bfseries},
title={Evolution of eigenvalues: $\eta$},
legend style={legend cell align=left, align=left, draw=white!15!black}
]
\addplot [color=mycolor1, thick]
table[row sep=crcr]{%
0	1.25325639203271\\
0.05	1.2228272138639\\
0.1	1.19712306972138\\
0.15	1.17272576890133\\
0.2	1.14954114242583\\
0.25	1.1274145499881\\
0.3	1.10571251127653\\
0.35	1.08556682085875\\
0.4	1.06580949967242\\
0.45	1.04649112560396\\
0.5	1.02767865719424\\
0.55	1.00928184964658\\
0.6	0.991284288152112\\
0.65	0.973956481719281\\
0.7	0.956459820156908\\
0.75	0.939200942658149\\
0.8	0.922133733742717\\
0.85	0.905267027926112\\
0.9	0.88859042708571\\
0.95	0.872133373990975\\
1	0.8558658494076\\
};
\addlegendentry{DS}

\addplot [color=mycolor1, thick, mark=x, forget plot]
table[row sep=crcr]{%
0	1.24667014945073\\
0.05	1.21987743084424\\
0.1	1.18903445513589\\
0.15	1.15801151540196\\
0.2	1.12713668571691\\
0.25	1.09663936037532\\
0.3	1.06655876311271\\
0.35	1.03693960694302\\
0.4	1.00788754690641\\
0.45	0.97936213118092\\
0.5	0.951408431262796\\
0.55	0.924023609244498\\
0.6	0.897293024205684\\
0.65	0.871161310333502\\
0.7	0.845665775856517\\
0.75	0.820888465899107\\
0.8	0.796771459917989\\
0.85	0.773370479453753\\
0.9	0.750644795592549\\
0.95	0.728669010530638\\
1	0.707339596385056\\
};

\addplot [color=mycolor2, thick, dashed]
table[row sep=crcr]{%
0	1.5\\
0.05	1.47489606729475\\
0.1	1.44859037823583\\
0.15	1.42337980352532\\
0.2	1.39846142827325\\
0.25	1.37380881424492\\
0.3	1.3492268861128\\
0.35	1.32443735240186\\
0.4	1.29909099835017\\
0.45	1.27229093913928\\
0.5	1.24360613602755\\
0.55	1.2117717976106\\
0.6	1.17531074465532\\
0.65	1.13333658011287\\
0.7	1.08543450771884\\
0.75	1.03321513041451\\
0.8	0.979233772723303\\
0.85	0.926832338731866\\
0.9	0.87874923876925\\
0.95	0.835765651950124\\
1	0.797773711033971\\
};
\addlegendentry{2S}

\addplot [color=mycolor2, thick, dashed, mark=x, forget plot, mark options={solid}]
table[row sep=crcr]{%
0	1\\
0.05	0.96806750706764\\
0.1	0.935421040681654\\
0.15	0.904270663253503\\
0.2	0.872948983333979\\
0.25	0.842432264392193\\
0.3	0.813007364838174\\
0.35	0.785241855991968\\
0.4	0.759746652839048\\
0.45	0.73711421155038\\
0.5	0.718436328666901\\
0.55	0.704623286808154\\
0.6	0.697059698760278\\
0.65	0.696163886596796\\
0.7	0.701923020551386\\
0.75	0.712475008969268\\
0.8	0.725000004705359\\
0.85	0.736205652497043\\
0.9	0.743724828560379\\
0.95	0.746664374977072\\
1	0.745671580258583\\
};

\addplot [color=mycolor3, thick, dashdotted]
table[row sep=crcr]{%
0	2.5\\
0.05	2.44256611953054\\
0.1	2.38674627213762\\
0.15	2.33193289761154\\
0.2	2.2785331362989\\
0.25	2.22634751853593\\
0.3	2.17503624253243\\
0.35	2.12467070192343\\
0.4	2.07551340633557\\
0.45	2.02747934165018\\
0.5	1.98172495848571\\
0.55	1.9362971961533\\
0.6	1.89194467794821\\
0.65	1.84846781306836\\
0.7	1.80585193261405\\
0.75	1.76422616138453\\
0.8	1.72352495470092\\
0.85	1.68379744854746\\
0.9	1.64497330415128\\
0.95	1.60717043616864\\
1	1.57024176793378\\
};
\addlegendentry{1S}
\legend{}
\node at (0.5, 2.5) [anchor=north] {(3a)};
\end{axis}
%
%
\definecolor{mycolor1}{rgb}{0.00000,0.44700,0.74100}%
\definecolor{mycolor2}{rgb}{1,0,0}%
\definecolor{mycolor3}{rgb}{0.92900,0.69400,0.12500}%
\definecolor{mycolor4}{rgb}{0.49400,0.18400,0.55600}%
\definecolor{mycolor5}{rgb}{0.46600,0.67400,0.18800}%
%

\begin{axis}[%
name=eig_att_normconst,
at=(eig_discr_normconst.south),
anchor=north,
yshift=-1.5\verticaldistance,
width=0.951\figurewidth,
height=\figureheight,
scale only axis,
xmin=0,
xmax=1,
xlabel style={yshift=0.5em, font=\color{white!15!black}},
xlabel={$z$},
ymin=0,
ymax=50,
axis background/.style={fill=white},
title style={yshift=-0.3em, font=\bfseries, align=center},
title={GNFT norming constants: $|Q_{k\ell}|$},
legend style={legend cell align=left, align=left, draw=white!15!black}
]
\addplot [color=mycolor1, thick]
table[row sep=crcr]{%
0	6.25488327781559\\
0.05	6.35613956624051\\
0.1	6.44808250939528\\
0.15	6.5280272547538\\
0.2	6.58969563309598\\
0.25	6.62658904845583\\
0.3	6.63227083458759\\
0.35	6.57641874092866\\
0.4	6.46983252222208\\
0.45	6.30628015210375\\
0.5	6.0889090639168\\
0.55	5.82823410859192\\
0.6	5.53921938121509\\
0.65	5.2220602855529\\
0.7	4.90217111209427\\
0.75	4.58588919947632\\
0.8	4.27898618660571\\
0.85	3.98590410707361\\
0.9	3.7078616715304\\
0.95	3.44579234977694\\
1	3.19405752730195\\
};
\addlegendentry{DS, $|Q_{11}|$}

\addplot [color=mycolor1, thick, mark=x]
table[row sep=crcr]{%
0	42.1624034979029\\
0.05	43.5137053210587\\
0.1	44.7541680356481\\
0.15	45.7972794364241\\
0.2	46.4466547626027\\
0.25	46.5994150330735\\
0.3	46.133088444049\\
0.35	44.527356806492\\
0.4	42.0959237577986\\
0.45	38.796873431358\\
0.5	34.8243796798256\\
0.55	30.4793968398651\\
0.6	26.0847399270574\\
0.65	21.6304101229132\\
0.7	17.7394943080906\\
0.75	14.3301053173312\\
0.8	11.4533114111105\\
0.85	9.13048859366139\\
0.9	7.34844782896379\\
0.95	6.09123915552587\\
1	5.27883629742821\\
};
\addlegendentry{DS, $|Q_{10}|$}

\addplot [color=mycolor2, thick, dashed]
table[row sep=crcr]{%
0	7.2519903715486\\
0.05	7.09610602763532\\
0.1	6.99431271389128\\
0.15	6.88028235567105\\
0.2	6.78476614665554\\
0.25	6.71696774797559\\
0.3	6.68325118399048\\
0.35	6.69688777803552\\
0.4	6.77182087067519\\
0.45	6.9458172804036\\
0.5	7.22082003142374\\
0.55	7.62724799540655\\
0.6	8.27690462620048\\
0.65	9.18597721993591\\
0.7	10.4152122294256\\
0.75	11.9317436614387\\
0.8	13.5430847585933\\
0.85	14.8268313790192\\
0.9	15.4232730678952\\
0.95	15.3331343802482\\
1	14.7901801104328\\
};
\addlegendentry{2S}

\addplot [color=mycolor2, thick, dashed, forget plot, mark=x, mark options={solid}]
table[row sep=crcr]{%
0	20.6873155878469\\
0.05	18.6907636762838\\
0.1	16.6916792629939\\
0.15	15.2042010632559\\
0.2	13.6931874129894\\
0.25	12.254053143775\\
0.3	11.0190477317122\\
0.35	9.85724771730873\\
0.4	8.8201753709334\\
0.45	7.91093257865598\\
0.5	7.07035287897421\\
0.55	6.40282989765383\\
0.6	5.84355187116783\\
0.65	5.44016705212306\\
0.7	5.18253797626875\\
0.75	5.02084778909258\\
0.8	4.85961092943967\\
0.85	4.62855434657783\\
0.9	4.29063535952206\\
0.95	3.9318493788546\\
1	3.62342620005052\\
};

\addplot [color=mycolor3, thick, dashdotted]
table[row sep=crcr]{%
0	5.00472195581252\\
0.05	4.8900398487803\\
0.1	4.77803761327888\\
0.15	4.66812112581361\\
0.2	4.56025213670788\\
0.25	4.45515602416643\\
0.3	4.35282154231699\\
0.35	4.25305239968595\\
0.4	4.15572229873522\\
0.45	4.06032070266327\\
0.5	3.96672915628408\\
0.55	3.87519247844471\\
0.6	3.78597852481752\\
0.65	3.69894464762956\\
0.7	3.61402833371681\\
0.75	3.53132058644909\\
0.8	3.45048794556542\\
0.85	3.37143242856142\\
0.9	3.29401118453139\\
0.95	3.21836299319142\\
1	3.14432225718992\\
};
\addlegendentry{1S}
\legend{}
\node at (0.5, 50) [anchor=north] {(3b)};
\legend{};
\end{axis}
%
%
\definecolor{mycolor1}{rgb}{0.00000,0.44700,0.74100}%
\definecolor{mycolor2}{rgb}{1,0,0}%
\definecolor{mycolor3}{rgb}{0.92900,0.69400,0.12500}%
\definecolor{mycolor4}{rgb}{0.49400,0.18400,0.55600}%
\definecolor{mycolor5}{rgb}{0.46600,0.67400,0.18800}%
%

\begin{axis}[%
name=eig_att_specamp,
at=(eig_discr_specamp.south),
anchor=north,
yshift=-1.5\verticaldistance,
width=0.951\figurewidth,
height=\figureheight,
scale only axis,
xmin=0,
xmax=1,
xlabel style={yshift=0.5em, font=\color{white!15!black}},
xlabel={$z$},
ymin=0,
ymax=1000,
axis background/.style={fill=white},
title style={yshift=-0.3em, font=\bfseries, align=center},
title={DS with NFT processing: $|Q_k|$},
legend style={legend cell align=left, align=left, draw=white!15!black}
]
\addplot [color=mycolor1, thick]
table[row sep=crcr]{%
0	926.305789453992\\
0.05	58.7652895612193\\
0.1	38.8447288726042\\
0.15	29.5644662822543\\
0.2	24.0343615949045\\
0.25	20.1602718390436\\
0.3	17.3316080568498\\
0.35	15.0862865335295\\
0.4	13.267718415405\\
0.45	11.798018502201\\
0.5	10.5788496031566\\
0.55	9.54606929416426\\
0.6	8.66124988062769\\
0.65	7.95487000664525\\
0.7	7.25721174545851\\
0.75	6.65243784346117\\
0.8	6.12055554564769\\
0.85	5.64889035756943\\
0.9	5.2278147105532\\
0.95	4.85017108301876\\
1	4.51060001232834\\
};
\addlegendentry{Double soliton}

\addplot [color=mycolor1, thick, mark=x, forget plot]
table[row sep=crcr]{%
0	932.538792776217\\
0.05	59.9926376085557\\
0.1	40.3659048168064\\
0.15	32.4455351254027\\
0.2	27.3384644149946\\
0.25	23.9510059927706\\
0.3	21.431486347119\\
0.35	19.2720661950249\\
0.4	17.8381611011306\\
0.45	16.4835146851784\\
0.5	15.3238450443406\\
0.55	14.2679400895517\\
0.6	13.4079042698231\\
0.65	12.6089945369136\\
0.7	11.8805322569382\\
0.75	11.1683980253576\\
0.8	10.6041479139642\\
0.85	10.0329918027249\\
0.9	9.49006103693334\\
0.95	9.01689241264779\\
1	8.52494413231182\\
};
\legend{}
\node at (0.5, 1000) [anchor=north] {(3c)};
\end{axis}
%
%
\definecolor{mycolor1}{rgb}{0.00000,0.44700,0.74100}%
\definecolor{mycolor2}{rgb}{1,0,0}%
\definecolor{mycolor3}{rgb}{0.92900,0.69400,0.12500}%
\definecolor{mycolor4}{rgb}{0.49400,0.18400,0.55600}%
\definecolor{mycolor5}{rgb}{0.46600,0.67400,0.18800}%
%

\begin{axis}[%
name=eig_noise_imag,
at=(eig_att_imag.south),
anchor=north,
yshift=-1.3\verticaldistance,
width=0.951\figurewidth,
height=\figureheight,
scale only axis,
xmin=-40,
xmax=2,
xlabel style={yshift=0.5em, font=\color{white!15!black}},
xlabel={$P$ (dBm)},
ymode=log,
ymin=0,
ymax=0.02,
axis background/.style={fill=white},
title style={yshift=-0.3em, font=\bfseries},
title={NMSE of eigenvalues: $\mathrm{NMSE}_{\eta}$},
legend style={legend cell align=left, align=left, draw=white!15!black}
]
\addplot [color=mycolor1, thick]
table[row sep=crcr]{%
	-37.5257498915995	0.00893999976247737\\
	-34.5154499349597	0.00609996502997709\\
	-31.5051499783199	0.00477073362786459\\
	-28.4948500216801	0.00351313380701576\\
	-25.4845500650403	0.0025960092394402\\
	-22.4742501084005	0.00212002174256313\\
	-19.4639501517607	0.00153468844709737\\
	-16.4536501951208	0.00138766002996991\\
	-13.443350238481	0.000987119174059659\\
	-10.4330502818412	0.000894956684691153\\
	-7.42275032520141	0.000690415373502536\\
	-4.4124503685616	0.000449053982777375\\
	-1.40215041192178	0.000326942474060281\\
	1.60814954471803	0.000182617557264895\\
};
\addlegendentry{DS}

\addplot [color=mycolor1, thick, forget plot, mark=x]
table[row sep=crcr]{%
	-37.5257498915995	0.0190374714589276\\
	-34.5154499349597	0.00458892370776392\\
	-31.5051499783199	0.00380529250342982\\
	-28.4948500216801	0.00316343659976254\\
	-25.4845500650403	0.00218332976494035\\
	-22.4742501084005	0.00166386546576914\\
	-19.4639501517607	0.00119077241255453\\
	-16.4536501951208	0.00108447855227524\\
	-13.443350238481	0.000759625956637328\\
	-10.4330502818412	0.000679385998839509\\
	-7.42275032520141	0.000479217108982794\\
	-4.4124503685616	0.000404125395375855\\
	-1.40215041192178	0.000245753206405882\\
	1.60814954471803	0.000188421584978022\\
};

\addplot [color=mycolor2, thick, dashed]
table[row sep=crcr]{%
-37.5257498915995	0.00211030576939723\\
-34.5154499349597	0.00147143541013196\\
-31.5051499783199	0.00108137239282267\\
-28.4948500216801	0.000727982344949873\\
-25.4845500650403	0.000537209140811327\\
-22.4742501084005	0.000356251610404633\\
-19.4639501517607	0.00025608122176492\\
-16.4536501951208	0.00017993420259779\\
-13.443350238481	0.000139706542527728\\
-10.4330502818412	9.64811356411475e-05\\
-7.42275032520141	7.8862510167881e-05\\
-4.4124503685616	0.000118953167409881\\
-1.40215041192178	0.000159997443924081\\
1.60814954471803	0.000113007660273713\\
};
\addlegendentry{2S}

\addplot [color=mycolor2, thick, dashed, mark=x, mark options={solid}, forget plot]
table[row sep=crcr]{%
	-37.5257498915995	0.0106567424904335\\
	-34.5154499349597	0.00668027144924164\\
	-31.5051499783199	0.00493905217466015\\
	-28.4948500216801	0.00309790290715522\\
	-25.4845500650403	0.00229160296669077\\
	-22.4742501084005	0.0017348564786202\\
	-19.4639501517607	0.00102374491750825\\
	-16.4536501951208	0.000840402194556814\\
	-13.443350238481	0.00060960971962807\\
	-10.4330502818412	0.000414667814107552\\
	-7.42275032520141	0.000323727643022708\\
	-4.4124503685616	0.000266293938008685\\
	-1.40215041192178	0.000264137666070666\\
	1.60814954471803	0.000188926648955305\\
};

\addplot [color=mycolor3, thick,  dashdotted]
table[row sep=crcr]{%
	-37.5257498915995	0.00581081715684207\\
	-34.5154499349597	0.00304032614413685\\
	-31.5051499783199	0.00202004652562966\\
	-28.4948500216801	0.00151174517422808\\
	-25.4845500650403	0.000973878150630822\\
	-22.4742501084005	0.000832303724867704\\
	-19.4639501517607	0.000459998198205195\\
	-16.4536501951208	0.000373773778345859\\
	-13.443350238481	0.000254909483032749\\
	-10.4330502818412	0.000193675735014409\\
	-7.42275032520141	0.000129901443374955\\
	-4.4124503685616	9.09562718999039e-05\\
	-1.40215041192178	6.28672138854033e-05\\
	1.60814954471803	4.60228271662845e-05\\
};
\addlegendentry{1-soliton}
\legend{}
\node at (-19, 0.02) [anchor=north] {(4a)};
\end{axis}
%
%
\definecolor{mycolor1}{rgb}{0.00000,0.44700,0.74100}%
\definecolor{mycolor2}{rgb}{1,0,0}%
\definecolor{mycolor3}{rgb}{0.92900,0.69400,0.12500}%
\definecolor{mycolor4}{rgb}{0.49400,0.18400,0.55600}%
\definecolor{mycolor5}{rgb}{0.46600,0.67400,0.18800}%
%

\begin{axis}[%
	name=eig_noise_normconst,
	at=(eig_att_normconst.south),
	anchor=north,
	yshift=-1.3\verticaldistance,
width=0.951\figurewidth,
height=\figureheight,
scale only axis,
xmin=-40,
xmax=2,
xlabel style={yshift=0.5em, font=\color{white!15!black}},
xlabel={$P$ (dBm)},
ymode=log,
ymin=0,
ymax=1,
axis background/.style={fill=white},
title style={yshift=-0.3em, font=\bfseries},
title={GNFT norming constants: $\mathrm{NMSE}_{Q_{k\ell}}$},
legend style={legend cell align=left, align=left, draw=white!15!black}
]
\addplot [color=mycolor1, thick]
table[row sep=crcr]{%
	-37.5257498915995	3.00616848199341\\
	-34.5154499349597	0.945568891827744\\
	-31.5051499783199	1.8170484930236\\
	-28.4948500216801	0.294655782988861\\
	-25.4845500650403	0.155775591282568\\
	-22.4742501084005	0.0936057280940029\\
	-19.4639501517607	0.0527582257443295\\
	-16.4536501951208	0.0409860330696962\\
	-13.443350238481	0.0229864299183212\\
	-10.4330502818412	0.018413003547124\\
	-7.42275032520141	0.0147667870151187\\
	-4.4124503685616	0.016999490199118\\
	-1.40215041192178	0.0374623901214348\\
	1.60814954471803	0.191486223959008\\
};
\addlegendentry{DS, $|Q_{11}|$}

\addplot [color=mycolor1, thick, mark=x]
table[row sep=crcr]{%
	-37.5257498915995	60.3765614421902\\
	-34.5154499349597	12.1224388723525\\
	-31.5051499783199	254.598615215422\\
	-28.4948500216801	1.57193962702624\\
	-25.4845500650403	0.612089283325341\\
	-22.4742501084005	0.244524208271347\\
	-19.4639501517607	0.14430904931054\\
	-16.4536501951208	0.0989997081103174\\
	-13.443350238481	0.0549419185761542\\
	-10.4330502818412	0.0422448630261778\\
	-7.42275032520141	0.0331293144261886\\
	-4.4124503685616	0.0443097956698989\\
	-1.40215041192178	0.131929639841541\\
	1.60814954471803	2.53444692127085\\
};
\addlegendentry{DS, $|Q_{10}|$}

\addplot [color=mycolor2, thick, dashed]
table[row sep=crcr]{%
	-37.5257498915995	0.399362743248117\\
	-34.5154499349597	0.210519269899749\\
	-31.5051499783199	0.142925609205353\\
	-28.4948500216801	0.0837746317080521\\
	-25.4845500650403	0.0559977056701361\\
	-22.4742501084005	0.0375481202913853\\
	-19.4639501517607	0.0244033152076222\\
	-16.4536501951208	0.0187007116327229\\
	-13.443350238481	0.0133055885647555\\
	-10.4330502818412	0.00964854969984156\\
	-7.42275032520141	0.00671851913565407\\
	-4.4124503685616	0.00476163549686458\\
	-1.40215041192178	0.017780084315277\\
	1.60814954471803	0.0801977119608101\\
};
\addlegendentry{2S}

\addplot [color=mycolor2, thick, dashed, forget plot, mark=x, mark options={solid}]
table[row sep=crcr]{%
-37.5257498915995	5.7217176636712\\
-34.5154499349597	1.20070366406848\\
-31.5051499783199	0.521490882442471\\
-28.4948500216801	0.250557064701386\\
-25.4845500650403	0.159060235830116\\
-22.4742501084005	0.110838631055286\\
-19.4639501517607	0.0609535994391578\\
-16.4536501951208	0.0506214273100148\\
-13.443350238481	0.0334900042882992\\
-10.4330502818412	0.0205446603200728\\
-7.42275032520141	0.0146783980057441\\
-4.4124503685616	0.0109505976486489\\
-1.40215041192178	0.0172595061209296\\
1.60814954471803	0.0162620152970983\\
};

\addplot [color=mycolor3, thick, dashdotted]
table[row sep=crcr]{%
	-37.5257498915995	0.0625105576197162\\
	-34.5154499349597	0.0488437803611493\\
	-31.5051499783199	0.0320743663350021\\
	-28.4948500216801	0.0239965327906545\\
	-25.4845500650403	0.0153931535119062\\
	-22.4742501084005	0.012226303898023\\
	-19.4639501517607	0.00733447140519603\\
	-16.4536501951208	0.00622329379308288\\
	-13.443350238481	0.00518848456858229\\
	-10.4330502818412	0.00665103274263549\\
	-7.42275032520141	0.011912097527957\\
	-4.4124503685616	0.0295555810926524\\
	-1.40215041192178	0.0826490207775916\\
	1.60814954471803	0.274054805851085\\
};
\addlegendentry{1S}
\legend{}
\node at (-19, 1) [anchor=north] {(4b)};
\end{axis}
%
%
\definecolor{mycolor1}{rgb}{0.00000,0.44700,0.74100}%
\definecolor{mycolor2}{rgb}{1,0,0}%
\definecolor{mycolor3}{rgb}{0.92900,0.69400,0.12500}%
\definecolor{mycolor4}{rgb}{0.49400,0.18400,0.55600}%
\definecolor{mycolor5}{rgb}{0.46600,0.67400,0.18800}%
%

\begin{axis}[%
name=eig_noise_specamp,
at=(eig_att_specamp.south),
anchor=north,
yshift=-1.3\verticaldistance,
width=0.951\figurewidth,
height=\figureheight,
scale only axis,
xmin=-40,
xmax=2,
xlabel style={yshift=0.5em, font=\color{white!15!black}},
xlabel={$P$ (dBm)},
ymode=log,
ymin=0,
ymax=30,
axis background/.style={fill=white},
title style={yshift=-0.3em, font=\bfseries, align=center},
title={DS with NFT processing: $\mathrm{NMSE}_{Q_k}$},
legend style={at={(0.03, 0.97)}, anchor=north west, legend cell align=left, align=left, draw=white!15!black}
]
\addplot [color=mycolor1, thick]
table[row sep=crcr]{%
-37.5257498915995	18.3690766027877\\
-34.5154499349597	67.3487909900052\\
-31.5051499783199	158.635687758381\\
-28.4948500216801	51398.5130636544\\
-25.4845500650403	154.759790963505\\
-22.4742501084005	62.9055262158215\\
-19.4639501517607	19.3305804962821\\
-16.4536501951208	11.9783881724584\\
-13.443350238481	14.4139390741812\\
-10.4330502818412	10.1698315174344\\
-7.42275032520141	8.16852151479047\\
-4.4124503685616	5.81670637773154\\
-1.40215041192178	5.78099981784064\\
1.60814954471803	5.32434009133158\\
};
\addlegendentry{Double soliton}

\addplot [color=mycolor1, thick, mark=x, forget plot]
table[row sep=crcr]{%
-37.5257498915995	0.403243975598501\\
-34.5154499349597	0.373296515642702\\
-31.5051499783199	0.312403382349301\\
-28.4948500216801	0.376187576591671\\
-25.4845500650403	0.376921599857504\\
-22.4742501084005	0.477518857438795\\
-19.4639501517607	0.455501300765614\\
-16.4536501951208	0.555853707648953\\
-13.443350238481	0.615080179961112\\
-10.4330502818412	0.6531004905073\\
-7.42275032520141	0.739304985031984\\
-4.4124503685616	0.735481798265374\\
-1.40215041192178	0.837526066034518\\
1.60814954471803	1.2063938420498\\
};
\legend{}
\node at (2, 30) [anchor=north east] {(4c)};
\end{axis}
	\end{tikzpicture}
	\caption{First row: effect of truncation on (1a) imaginary part of eigenvalues, (1b) norming constants of a DS, a 2-soliton (2S) and a 1-soliton (1S). The NFT spectral amplitudes of the DS are shown in (1c). Second row: effect of sampling period $T_s$. Third row: effect of attenuation. Fourth row: effect of distributed, non-linearly mixing noise. The legend in the top plot of a column applies to all the plots in that column.}
	\label{fig:implementation}
\end{figure*}
\begin{table}[tbp]\centering
	\caption{Parameters of the pulses used in Fig~\ref{fig:implementation}}
		\label{tab:impl_parameters}
\begin{tabular}{|c|c|c|c|c|}
	\hline
	\textbf{Pulse} & $\lambda$ & $Q_{k\ell}$ & $T_{0.999}$ & $B_{0.999}$ \\
	\hline
	DS & $\lambda_1=1.25j$ & \begin{tabular}{@{}c@{}}$Q_{11}=6.25$ \\ $Q_{10}=40.10$ \end{tabular} & $5.25$ & $12.67$ \\
	\hline
	2S &\begin{tabular}{@{}c@{}}$\lambda_1=1.5j$ \\ $\lambda_2=1j$ \end{tabular} & \begin{tabular}{@{}c@{}}$Q_{1}=20.6956$ \\ $Q_{2}=7.2477$ \end{tabular} & $5.41$ & $12.67$ \\
	\hline
	1S & $\lambda_1=2.5j$ & $Q_{1}=5$ & $1.51$ & $23.75$ \\
	\hline
	
\end{tabular}
\end{table}

We simulated three pulses: a DS, a 2S and a 1S, with the parameters in Table~\ref{tab:impl_parameters}. The three pulses have the same energy, which forces the 1S to have different duration and bandwidth than the 2S and DS. Heuristic optimization was used to obtain a small time-bandwidth product (TBP) for all signals. The duration $T_{0.999}$ of the time interval that contains $99.9\%$ of the signal energy, as well as the bandwidth $B_{0.999}$, are listed in Table~\ref{tab:impl_parameters}. Using a sampling time of $T_s=0.0058 s$, we varied the truncation interval $T$. For a fairer comparison, the results of the simulations are plotted as a function of the ratio $T/T_{0.999}$. For reference, the ratio of truncated energy for the different truncation intervals is plotted in Fig.~\ref{fig:trunc_energy}. The eigenvalues of the generated pulse were found using a Newton-Raphson search algorithm~\cite{mansoor_all}. The first-order spectral amplitudes $Q_1$ and $Q_2$ and the second-order norming constants $Q_{11}$ and $Q_{10}$ of the found eigenvalues were computed using \textit{FBT}: forward-backward computation~\cite{aref_control_detection} with the trapezoidal kernel~\eqref{eq:trap}.
%
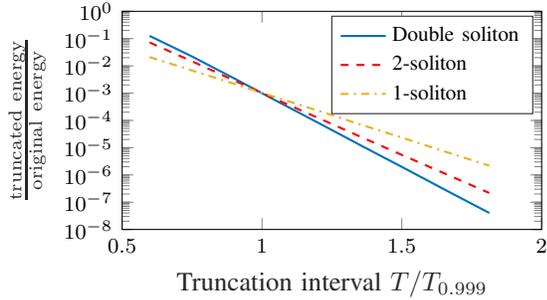
\begin{figure}[tbp]\centering
	\setlength{\figurewidth}{0.32\textwidth}
	\setlength{\figureheight}{0.5\figurewidth}
%
%
\definecolor{mycolor1}{rgb}{0.00000,0.44700,0.74100}%
\definecolor{mycolor2}{rgb}{1,0,0}%
\definecolor{mycolor3}{rgb}{0.92900,0.69400,0.12500}%
\definecolor{mycolor4}{rgb}{0.49400,0.18400,0.55600}%
\definecolor{mycolor5}{rgb}{0.46600,0.67400,0.18800}%
\begin{tikzpicture}

\begin{axis}[%
width=0.961\figurewidth,
height=\figureheight,
at={(0\figurewidth,0\figureheight)},
scale only axis,
xmin=0.5,
xmax=2,
xlabel style={font=\color{white!15!black}},
xlabel={Truncation interval $T/T_{0.999}$},
ymode=log,
ymin=1e-08,
ymax=1,
ytick={1e-8, 1e-7, 1e-6, 1e-5, 1e-4, 1e-3, 1e-2, 1e-1, 1},
yminorticks=true,
ylabel style={font=\color{white!15!black}},
ylabel={$\frac{\mathrm{truncated\ energy}}{\mathrm{original\ energy}}$},
axis background/.style={fill=white},
legend style={legend cell align=left, align=left, draw=white!15!black}
]
\addplot [color=mycolor1, thick]
  table[row sep=crcr]{%
0.598170731707317	0.126038547904397\\
0.74997921286031	0.022108423376954\\
0.901787694013304	0.00342173458872597\\
1.0535961751663	0.000508446697909726\\
1.20540465631929	7.75206632319669e-05\\
1.35721313747228	1.1541367180512e-05\\
1.50902161862528	1.76762365933225e-06\\
1.66083009977827	2.63375449449654e-07\\
1.81263858093126	4.02954120071186e-08\\
};
\addlegendentry{Double soliton}

\addplot [color=mycolor2, thick, dashed]
  table[row sep=crcr]{%
0.598170731707317	0.0723797864341782\\
0.74997921286031	0.0147164482843107\\
0.901787694013304	0.00287467194464797\\
1.0535961751663	0.000573021284324904\\
1.20540465631929	0.000117411901483067\\
1.35721313747228	2.39395193298853e-05\\
1.50902161862528	4.95259814425886e-06\\
1.66083009977827	1.04398384048743e-06\\
1.81263858093126	2.18204752600037e-07\\
};
\addlegendentry{2-soliton}

\addplot [color=mycolor3, thick, dashdotted]
  table[row sep=crcr]{%
0.598170731707317	0.0210178338586561\\
0.74997921286031	0.0067987790562094\\
0.901787694013304	0.00218676901304293\\
1.0535961751663	0.000682383876097092\\
1.20540465631929	0.000219212402154612\\
1.35721313747228	6.83591933572369e-05\\
1.50902161862528	2.19368514275953e-05\\
1.66083009977827	6.84032286668579e-06\\
1.81263858093126	2.19691595559901e-06\\
};
\addlegendentry{1-soliton}

\end{axis}

\end{tikzpicture}%
	\caption{Energy outside truncation interval for the pulses used in Fig~\ref{fig:implementation}}
		\label{fig:trunc_energy}
\end{figure}

Fig.~\ref{fig:implementation}, (1a) shows the relative error $\Delta\eta/\eta$ in the imaginary part of the eigenvalues. For the DS (blue solid curves), truncation splits the eigenvalue into two closely spaced eigenvalues $\lambda_1$ (unmarked curve) and $\lambda_2$ (marked curve), which move closer as $T$ increases. For $T/T_{0.999}\ge 1.357$ (vertical blue solid line), the search algorithm does not distinguish two eigenvalues anymore. Even when two are found, their spectral amplitudes (1c), obtained with NFT processing, are unstable (highly dependent on $T$), because the denominator $a_{\lambda}(\lambda_k)$ is close to $0$. The norming constants in (1b), obtained using the GNFT, are much more stable.

The 2S (red dashed curves) performs better than the DS, probably because the higher order derivatives of $a(\lambda)$ in~\eqref{eq:norming_constants_L2} are less stable. However, a comparison of the blue solid curves in the second and third columns of Fig.~\ref{fig:implementation} shows that GNFT processing is better than NFT when the transmit signal has very closely spaced eigenvalues or higher multiplicity eigenvalues. Note that, for the remainder of the paper, \textit{GNFT processing} is the same as \textit{NFT processing} for a 1S or a 2S. Only for a DS does GNFT processing assume the existence of one eigenvalue with higher multiplicity, while NFT processing assumes two closely spaced eigenvalues. The second column of Figure~\ref{fig:implementation} shows that the 1S (yellow dash-dotted curve) is more robust than the 2S and DS to all impairments. This comes at the cost of spectral efficiency, as the 1S offers fewer degrees of freedom for communication.
\vspace{-0.3em}
\subsection{Effect of pulse discretization}
For the second row of Fig.~\ref{fig:implementation}, we chose a large truncation interval $T=10.5866$ and varied the sampling time $T_s$. Due to the different bandwidths of the pulses, we plot the results as a function of the product $T_s\cdot B_{0.999}$. Again, with enough resolution ($T_s B_{0.999}\le 0.318$), the search algorithm does not find two distinct eigenvalues anymore. The norming constants are much more stable than the spectral amplitudes even when two distinct, closely spaced eigenvalues are found. The error in the norming constants $|\Delta Q_{k\ell}|/|Q_{k\ell}|$ for the DS is almost the same as for the 2S (see plot (2b)).
\vspace{-0.3em}
\subsection{Effect of attenuation}\label{sec:attenuation}
For the third row of Fig.~\ref{fig:implementation}, we simulated the propagation from $z=0$ to $z=1$ of the three pulses with $T_s=0.0058$ and $T=10.5866$ along the noise-free NLSE channel with normalized attenuation coefficient $\alpha=0.4646$ (corresponding to $10$-km propagation and attenuation $0.2\;\mathrm{dB}/\mathrm{km}$ with $\beta_2$ and $\gamma$ from Table~\ref{tab:parameters}). The eigenvalue of the DS splits into two eigenvalues that separate due to attenuation. The attenuation affects the spectral amplitudes (3c) much more strongly than it affects the norming constants (3b). The eigenvalues of the 2S behave similarly to the results in~\cite{prilepsky_breakup}. 

\subsection{Effect of noisy NLSE propagation}
We simulated the propagation of the three pulses along a $4000$-km lossless link with $T_s=0.0771$ and $T=48.43$ and the parameters in Table~\ref{tab:parameters}. The spectral density of the distributed noise was $N_{\mathrm{ASE}}=6.4893\cdot 10^{-24} \;\mathrm{W}\mathrm{s}/\mathrm{m}$. The signal power was varied by changing the free parameter $T_0$. The figure of merit is the normalized mean square error:
\begin{equation}
\mathrm{NMSE}_x=\frac{\mathrm{E}\left[\left|x-\overline{x}\right|^2\right]}{\left|\overline{x}\right|^2}
\end{equation}
where $\overline{x}\triangleq\mathrm{E}[x]$. Again, the instability of the spectral amplitudes is clear from the results in Fig~\ref{fig:implementation}, (4c).

Simulations with different values of the soliton parameters $\lambda_k$, $Q_{k\ell}$ yield similar curves to those in Fig~\ref{fig:implementation} for the three pulses, though the performance comparison between the pulses changes. Only the shape of the attenuation curves (third row) seems to strongly depend on the initial parameters.

\section{Information transmission using the GNFT}\label{sec:experiment}
We simulated a communications system with the parameters in Table~\ref{tab:parameters}. We compared DS, a 2S and a 1S with the same eigenvalues as the pulses in Section~\ref{sec:implementation}. The 1S uses multi-ring modulation on $Q_1=Q_d(\lambda_1)$ with $32$ rings and $128$ phases per ring. The 2S has the two spectral amplitudes $Q_1=Q_d(\lambda_1)$ and $Q_2=Q_d(\lambda_2)$, while the DS has the two norming constants $Q_{11}$ and $Q_{10}$. Both the 2S and the DS have $4$ rings and $16$ phases per spectral amplitude. These parameters were heuristically optimized to obtain a small TBP for all the transmit signals and all positions in $z$. 
The ring amplitudes for the 1S are
\begin{equation}
\left|Q_0\right|\in\left\{0.088754\cdot 1.6142^k \colon k\inset{0}{31}\right\}.
\end{equation}
The ring amplitudes for the 2S and DS are given in Table~\ref{tab:rings_2}. The phases are uniformly spaced in $[0, 2\pi)$, starting at $0$ for $Q_0$, $Q_1$ and $Q_{11}$, and at $\pi/16$ for $Q_2$ and $Q_{10}$. The optimal criterion for choosing ring amplitudes is not known, but expressions such as~\eqref{eq:center_time} suggest that geometric progressions are better suited than arithmetic progressions.

The free parameter $T_0$ in~\eqref{eq:normalize} was used to obtain the desired powers. We used a sampling period of $T_s=0.0771$ and a truncation interval of $T=48.43$ in the simulations.

Lossless propagation according to~\eqref{eq:nlse_analog} was simulated using the split-step Fourier method. In all systems, the transmitter used closed-form expressions to generate the solitons, and the receiver used FBT to obtain the norming constants. Equalization was performed by inverting~\eqref{eq:time_evolution}. The mutual information of the transmitted and received symbols was measured and normalized by the TBP to obtain the spectral efficiency. In the 2S and DS systems, the joint mutual information $I(Q_1^{(TX)}, Q_2^{(TX)}; Q_1^{(RX)}, Q_2^{(RX)})$ was computed, where $Q_k^{(TX)}$ refers to the transmitted symbols and $Q_k^{(RX)}$ refers to the received and equalized symbols. 

Figure~\ref{fig:double_soliton} shows the spectral efficiency for the three systems. At their optimal power, the DS performs better than the 1S, but worse than the 2S. At this point, the DS has broadened in time at most by $14\%$, and the 2S by $11\%$. This small difference is not enough to account for the observed gap in spectral efficiency. The main reason for this gap is the lower stability of the DS: the higher order derivatives in~\eqref{eq:norming_constants_L2} make the norming constants of the DS (especially $Q_{10}$) less stable than the spectral amplitudes of the 2S. The results of Section~\ref{sec:implementation} and Fig.~\ref{fig:implementation} also support this view. However, Fig.~\ref{fig:double_soliton} demonstrates that the generalized NFT with multiple zeros can be used to transmit information. Although the DS does not seem to offer any practical advantage with respect to the 2S, the use of an additional degree of freedom might bring improvements in systems with many eigenvalues, where close spacing is unavoidable.

\begin{table}[t]\centering
	\caption{Simulation parameters}
	\label{tab:parameters}
	\begin{tabular}{|c|c|c|}
		\hline
		\textbf{Parameter} & \textbf{Symbol} & \textbf{Value} \\ 
		\hline
		Dispersion coefficient  & $\beta_2$ & $-21.667\;\mathrm{ps}^2/\mathrm{km}$ \\
		Nonlinear coefficient  & $\gamma$ & $1.2578\;\mathrm{W}^{-1}\mathrm{km}^{-1}$ \\		
		Fiber length & $z$ & $4000\;\mathrm{km}$ \\
		Noise spectral density & $N_{\mathrm{ASE}}$ & $6.4893\cdot 10^{-24} \mathrm{W}\mathrm{s}/\mathrm{m}$ \\		
		\hline		
	\end{tabular}	
\end{table}

\begin{table}[t]\centering
	\caption{Ring amplitudes for the 2S and DS systems}
	\label{tab:rings_2}
	\begin{tabular}{|c|c c c c |}
		\hline
		$|Q_{1}| (\lambda=1.5j)$ & 2.5355 & 2.8364 & 3.1730 & 3.5496 \\ 
		\hline
		$|Q_{2}| (\lambda=1j)$ & 0.2662 & 1.0211 & 3.9173 & 15.0283 \\
		\hline		
%
		$|Q_{11}|$ & 5.3785 & 5.9449 & 6.5708 & 7.2627 \\ 
		\hline
		$|Q_{10}|$ & 34.3750 & 39.3496 & 45.0440 & 51.5625 \\
		\hline		
	\end{tabular}	
\end{table}

\begin{figure}[t]\centering
	\setlength{\figurewidth}{0.75\columnwidth}
	\setlength{\figureheight}{0.50\figurewidth}
%
%
\definecolor{mycolor1}{rgb}{0.00000,0.44700,0.74100}%
\definecolor{mycolor2}{rgb}{1,0,0}%
\definecolor{mycolor3}{rgb}{0.92900,0.69400,0.12500}%
\definecolor{mycolor4}{rgb}{0.49400,0.18400,0.55600}%
\definecolor{mycolor5}{rgb}{0.46600,0.67400,0.18800}%
\begin{tikzpicture}

\begin{axis}[%
width=0.953\figurewidth,
height=\figureheight,
at={(0\figurewidth,0\figureheight)},
scale only axis,
xmin=-40,
xmax=5,
xlabel style={font=\color{white!15!black}},
xlabel={P (dBm)},
ymin=0,
ymax=1.2,
ylabel style={font=\color{white!15!black}},
ylabel={Spectral efficiency (bits/s/Hz)},
axis background/.style={fill=white},
legend style={at={(0.5, 0.03)}, anchor=south, legend cell align=left, align=left, draw=white!15!black}
]

%

\addplot [color=mycolor1, thick]
table[row sep=crcr]{%
	-37.5257498915995	0.0853025290785403\\
	-34.5154499349597	0.26345488137457\\
	-31.5051499783199	0.392352986663292\\
	-28.4948500216801	0.511163032645065\\
	-25.4845500650403	0.605205594658925\\
	-22.4742501084005	0.688105295740116\\
	-19.4639501517607	0.753953292145083\\
	-16.4536501951208	0.813148252803985\\
	-13.443350238481	0.85728904133414\\
	-10.4330502818412	0.879248409334423\\
	-7.42275032520141	0.853877173300721\\
	-4.4124503685616	0.76544488372106\\
	-1.40215041192178	0.546160695877817\\
	1.60814954471803	0.134993585161862\\
};
\addlegendentry{DS}

\addplot [color=mycolor2, thick, dashed]
table[row sep=crcr]{%
	-37.5257498915995	0.259574098038044\\
	-34.5154499349597	0.379790587484555\\
	-31.5051499783199	0.542405255737685\\
	-28.4948500216801	0.670919095287136\\
	-25.4845500650403	0.778962398973691\\
	-22.4742501084005	0.876713692323309\\
	-19.4639501517607	0.96162557569153\\
	-16.4536501951208	1.03198929574809\\
	-13.443350238481	1.08638228646694\\
	-10.4330502818412	1.12274817004579\\
	-7.42275032520141	1.14227119673438\\
	-4.4124503685616	1.14374578722421\\
	-1.40215041192178	1.11662211980483\\
	1.60814954471803	1.01694682775802\\
};
\addlegendentry{2S}

%

\addplot [color=mycolor3, thick, dashdotted]
table[row sep=crcr]{%
	-37.5257498915995	0.482933144801612\\
	-34.5154499349597	0.586232565764654\\
	-31.5051499783199	0.647169544921794\\
	-28.4948500216801	0.674676151262979\\
	-25.4845500650403	0.7044605732511\\
	-22.4742501084005	0.728221661407285\\
	-19.4639501517607	0.735630632443676\\
	-16.4536501951208	0.746087099052493\\
	-13.443350238481	0.742733102607165\\
	-10.4330502818412	0.727604402137002\\
	-7.42275032520141	0.689084009783446\\
	-4.4124503685616	0.646822577648528\\
	-1.40215041192178	0.591277852905414\\
	1.60814954471803	0.505445801026071\\
};
\addlegendentry{1S}

%

\end{axis}
\end{tikzpicture}%
	\caption{Spectral efficiency of three solitonic signals}	
	\label{fig:double_soliton}
\end{figure}
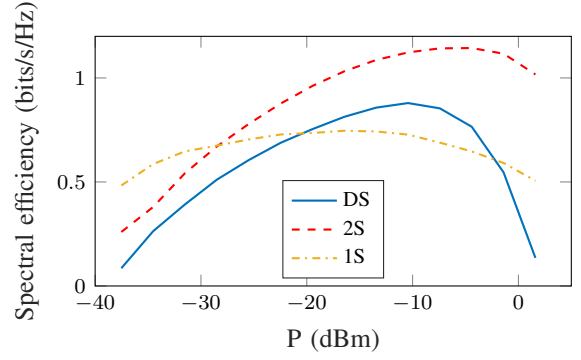


\section{Conclusion}\label{sec:conclusion}
Starting from the theory in~\cite{aktosun_ho, martines_ho}, we proved some properties of the GNFT that are useful for communications. We designed and implemented algorithms to compute the GNFT, and we numerically demonstrated information transmission using higher multiplicity eigenvalues. With this, we extend the class of signals that admit an NFT, providing additional degrees of freedom for NFT-based optical communications. 

There are several directions for future work. Extending the Darboux algorithm to the IGNFT would speed its computation. More insight into the duration, bandwidth and robustness to noise of multiple eigenvalue signals would be useful.

\appendices

\section{Proof of the properties of the GNFT}\label{app:properties}
In the following, all primed variables ($a'$) refer to the spectral functions of the shifted signal $q'(t)$.
\subsubsection{Phase shift} replacing $q$ with $qe^{j\phi_0}$ in~\eqref{eq:zs_ab}, we have $a'(\lambda)=a(\lambda)$ and $b'(\lambda)=b(\lambda)e^{-j\phi_0}$. The property then follows from~\eqref{eq:Q_cont} and~\eqref{eq:Q_kl}.
\subsubsection{Time shift} replacing $t\to t-t_0$ in~\eqref{eq:zs} proves that $a'(\lambda)=a(\lambda)e^{j\lambda t_0}$ and $b'(\lambda)=b(\lambda)e^{-j\lambda t_0}$. The expressions~\eqref{eq:tshift_c} and ~\eqref{eq:tshift_lambda} follow immediately. From~\eqref{eq:Q_kl} we have
\begin{align}
Q_{k\ell}'&=&&\frac{j^\ell}{(L_k-\ell-1)!} \nonumber \\ &&&\cdot\lim\limits_{\lambda\to\lambda_k}\derivk{}{\lambda}{L_k-\ell-1}\left[e^{-2j\lambda t_0}\left(\lambda-\lambda_k\right)^{L_k}\frac{b(\lambda)}{a(\lambda)}\right] \nonumber \\
&=&&e^{-2j\lambda_k t_0}\sum_{u=0}^{L_k-\ell-1}\frac{1}{u!}\left(-2t_0\right)^u Q_{k,\ell+u}
\label{eq:timeshift}
\end{align}
where we applied the product rule
\begin{equation}
\derivk{}{\lambda}{r}\left(f(\lambda)g(\lambda)\right)=\sum_{u=0}^{r}\left(\begin{matrix}r\\u\end{matrix}\right)f^{(u)}(\lambda)g^{(r-u)}(\lambda)
\label{eq:product}
\end{equation}
where
\begin{align*}
f(\lambda)&=\exp(-2j\lambda t_0)\\ g(\lambda)&=(\lambda-\lambda_k)^{L_k}b(\lambda)/a(\lambda).
\end{align*}
The last line of ~\eqref{eq:timeshift} is the same as~\eqref{eq:tshift_d}.
\subsubsection{Frequency shift} using the change of variable $\lambda\to\lambda-\omega_0$ in~\eqref{eq:zs_ab}, we have $a'(\lambda)=a(\lambda-\omega_0)$ and $b'(\lambda)=b(\lambda-\omega_0)$, from which the property follows.
\subsubsection{Time dilation} the change of variable $t\to t/T$ in~\eqref{eq:zs_ab} proves that $a'(\lambda)=a(T\lambda)$ and $b'(\lambda)=b(T\lambda)$. Using this in~\eqref{eq:Q_kl_series} proves~\eqref{eq:tdil}.
\subsubsection{Parseval's theorem} this is a particular case ($n=0$) of the more general \textit{trace formula}:
\begin{align}
C_n=&\frac{1}{\pi}\int_{-\infty}^{\infty}\left(2j\lambda\right)^n\log\left(1+\left|Q_c(\lambda)\right|^2\right)\diff{\lambda} \nonumber \\ & +\frac{4}{n+1}\left(2j\right)^n\sum_{k=0}^{K}L_k\Imag{\lambda_k^{n+1}}
\end{align}
where $C_n$ are the \textit{constants of motion}, of which $C_0$ is the signal energy~\eqref{eq:parseval}. The proof for simple eigenvalues is given in~\cite[Sec. 1.6]{ablowitz_ist}. The result is extended to multiple eigenvalues by allowing several $\zeta_m$ in~\cite[Eq. (1.6.18)]{ablowitz_ist} to be equal.

\section{Proof of Lemma~\ref{th:a_lambda}}\label{app:a_lambda}
Applying Fa\`a di Bruno's formula~\cite[pp. 43-44]{arbogast_faa} to $1/a(\lambda)$, and then the product rule~\eqref{eq:product}, we can write a quotient rule for higher order derivatives:
\begin{multline}
\derivk{}{\lambda}{n}\frac{c}{a}=\sum_{m=0}^{n}\left[\vphantom{\prod_{i=1}^{m}}\binom{n}{m}c^{(n-m)}\right. \\ \left.\sum_{\mathbf{p}\in\mathcal{P}(m)}\frac{(-1)^{|\mathbf{p}|}m!}{p_1!{1!}^{p_1}\cdots {p_m}!{m!}^{p_m}}\frac{|\mathbf{p}|!}{a^{|\mathbf{p}|+1}}\prod_{i=1}^{m}\left(a^{(i)}\right)^{p_i}\right].
\label{eq:quotient}
\end{multline}
Recall that $a^{(i)}$ denotes an $i$-th order derivative. Here, $\mathcal{P}(m)$ denotes the set of partitions $\mathbf{p}$ of $m$:
\begin{equation}
\mathbf{p}=\left[p_1,\cdots,p_m\right], \quad \sum_{i=1}^{m} ip_i=m,\quad p_i\in\mathbb{N}\cup\{0\}
\end{equation}
and $|\mathbf{p}|=\sum_{i=1}^{m} p_i$ is the \textit{cardinality} of $\mathbf{p}$.
Using~\eqref{eq:quotient} in~\eqref{eq:Q_kl} we have
\begin{equation}
Q_{k\ell}=\lim\limits_{\lambda\to\lambda_k}\frac{g(\lambda)}{a(\lambda)^{L_k-\ell}}
\label{eq:t_kl_long}
\end{equation}
where
\begin{multline}
g(\lambda)=j^\ell\sum_{m=0}^{L_k-\ell-1}\left[\vphantom{\prod_{i=1}^{m}}\frac{1}{m!(L_k-\ell-m-1)!}c^{(L_k-\ell-m-1)}(\lambda)\right. \\ \left. \cdot\sum_{\mathbf{p}\in\mathcal{P}(m)}\frac{(-1)^{|\mathbf{p}|}m!}{p_1!{1!}^{p_1}\cdots {p_m}!{m!}^{p_m}}|\mathbf{p}|!a^{L_k-\ell-|\mathbf{p}|-1}\prod_{i=1}^{m}\left(a^{(i)}\right)^{p_i}\right]
\label{eq:g}
\end{multline}
and $c(\lambda)\triangleq (\lambda-\lambda_k)^{L_k}b(\lambda)$. Note that $a$ has a zero of order $L_k$, and therefore $a^{(m)}(\lambda_k)=0$ for $m\inset{0}{L_k-1}$. To compute $Q_{k\ell}$, we repeatedly apply L'H\^opital's rule until the numerator and the denominator become nonzero in the limit:
\begin{equation}
Q_{k\ell}=\frac{g^{(r)}(\lambda_k)}{\left.\left[\mathrm{d}^r a(\lambda)^{L_k-\ell}/\mathrm{d}\lambda^r\right]\right|_{\lambda=\lambda_k}}.
\end{equation}
The number $r$ of times we need to differentiate is the order of the zero in the denominator:
\begin{equation}
r=L_k\left(L_k-\ell\right).
\label{eq:rr}
\end{equation}
The summands in $g^{(r)}(\lambda)$ are of the form
\begin{equation}
g_s(\lambda)=K_s\derivk{}{\lambda}{(L_k-\ell)L_k}c^{(L_k-\ell-m-1)}a^{L_k-\ell-|\mathbf{p}|-1}\prod_{i=1}^{m}\left(a^{(i)}\right)^{p_i}
\label{eq:dg}
\end{equation}
where $s$ is an index, and $K_s$ is a constant independent of $\lambda$. If we apply the product rule to~\eqref{eq:dg}, any nonzero summand at $\lambda=\lambda_k$ must differentiate the factor $c^{(L_k-\ell-m-1)}$ at least $\ell+m+1$ times. Thus, the other factors are differentiated at most $(L_k-\ell)L_k-\ell-m-1$ times. The derivative of
\begin{equation}
a^{L_k-\ell-|\mathbf{p}|-1}\prod_{i=1}^{m}\left(a^{(i)}\right)^{p_i}
\end{equation}
is a sum of terms of the same form. Each new term has the same amount $\sum_i p_i$ of \textit{a-factors} as the original (an \textit{a-factor} here refers to $a$ or one of its derivatives), and the number of differentiations $\sum_{i}ip_i$ in the a-factors is increased by $1$. We conclude that $g_s(\lambda)$ is made up of summands that contain
\begin{itemize}
	\item $L_k-\ell-|\mathbf{p}|-1+\sum_i p_i=L_k-\ell-1$ a-factors that have
	\item $(L_k-\ell)L_k-\ell-m-1+\sum_i ip_i=(L_k-\ell)L_k-\ell-1$ differentiations.
\end{itemize}
All the a-factors of nonzero summands must be at least $L_k$-order derivatives. In the worst case, there are $L_k-\ell-2$ a-factors with an $L_k$-th order derivative. The remaining a-factor must have a derivative of order $\left[(L_k-\ell)L_k-\ell-1\right]-\left[L_k(L_k-\ell-2)\right]=2L_k-\ell-1$.

The product of a-factors in~\eqref{eq:dg} has a zero at $\lambda_k$ of order 
$$L_a=(L_k-\ell-|\mathbf{p}|-1)L_k+\sum_i p_i(L_k-i)=(L_k-\ell-1)L_k-m.$$
This means that any nonzero summand after applying the product rule will differentiate $c^{(L_k-\ell-m-1)}$ at most
$$(L_k-\ell)L_k-L_a=L_k+m$$
times. This yields a term with $c^{(2L_k-\ell-1)}$. As $c=(\lambda-\lambda_k)^{L_k}b$, the  highest order derivative on $b$ is $b^{(L_k-\ell-1)}$.


\section*{Acknowledgment}
The author wishes to thank Prof. G. Kramer and B. Leible for useful comments and proofreading the paper.



\bibliographystyle{IEEEtran}
\bibliography{multiple_eigenvalues}
%
%
%

\end{document}